\documentclass[runningheads]{llncs}
\newcommand\fast[2]{#2}

\usepackage{hyperref}
\usepackage{color}
\definecolor{darkgrey176}{RGB}{176,176,176}
\definecolor{darkorange25512714}{RGB}{255,127,14}
\definecolor{lightgrey204}{RGB}{204,204,204}
\definecolor{steelblue31119180}{RGB}{31,119,180}

\usepackage{amssymb,amsmath}
\usepackage{cleveref}

\usepackage{subfig}
\usepackage[T1]{fontenc}
\usepackage[ruled]{algorithm2e} 
\usepackage[normalem]{ulem}

\SetAlFnt{\small}
\SetAlCapFnt{\small}
\SetAlCapNameFnt{\small}
\SetAlCapHSkip{0pt}
\IncMargin{-\parindent}

\newcommand\gammacex{\gamma_{\text{CEX}}}

\usepackage{longtable}
\usepackage{pgfplots}
\usepackage{tikz}
\usetikzlibrary{arrows.meta,positioning,calc,shapes.geometric,fit}
\usepackage{xcolor}
\newcommand{\grayzero}[1]{\textcolor{gray!90!black}{#1}}

\usepackage{makecell} 
\usepackage{enumitem}
\usetikzlibrary{positioning}
\usepackage{multirow}
\newtheorem{assumption}[theorem]{Assumption}
\usepackage{comment}

\begin{document}
\title{Where Does MEV Really Come From? \\ 
Revisiting CEX–DEX Arbitrage on Ethereum}
\titlerunning{Revisiting CEX–DEX Arbitrage on Ethereum}

\author{
Bence Ladóczki\inst{1}\orcidID{0000-0002-5056-4184} 
\and Miklós Rásonyi\inst{2}\orcidID{0000-0002-3105-4752} 
\and János Tapolcai\inst{3}\orcidID{0000-0002-3512-9504}
}

\institute{
BME and HUN-REN Information Systems Research Group, Budapest, Hungary\and
HUN-REN Alfréd Rényi Institute of Mathematics and E\"otv\"os Lor\'and University, Budapest, Hungary, 
\email{rasonyi.miklos@renyi.hu} \and
Budapest University of Technology and Economics (BME), Budapest, Hungary\footnote{Dept. of Telecommunications and Artificial Intelligence, Faculty of Electrical Engineering and Informatics,  
Műegyetem rkp. 3., H-1111 Budapest, Hungary.
\email{ladoczki.bence@vik.bme.hu}, \email{tapolcai@tmit.bme.hu}}
}

\maketitle

\begin{abstract} 
A central question of the Ethereum ecosystem is where Maximal Extractable Value (MEV) revenue originates and to what extent it stems from harming unsuspecting users. 
It is acceptable if MEV arises from arbitrages between centralised and decentralised exchanges (CEX--DEX). 
Yet theoretical models have significantly underestimated the scale of these arbitrages, while empirical studies have highlighted their importance -- though these remain conservative estimates, constrained by numerous debatable heuristic assumptions.
Revisiting the theoretical model, we found that CEX–DEX arbitrages require trading volumes on the order of the total activity of major liquidity pools and yield profits comparable to MEV.
Most prior AMM models utilised the Black–Scholes (BS) stochastic differential equation (SDE) -- i.e., geometric Brownian motion -- and assumed continuous price trajectories where asset prices move in small increments only. 
We argue that BS underestimates arbitrage profits by ignoring price jumps, which are precisely the points at which arbitrage opportunities tend to arise.
To address this gap, we present an extended discrete-time AMM model in which the price process is the sum of a diffusive component and stochastic jumps that can have arbitrary noise distributions. 
Although mathematically more involved this framework allows us to employ a general discrete-time SDE and compute the stationary probability distribution via function iteration with geometric convergence. 
We further prove that the resulting mispricing process is an ergodic Markov chain.
We implement our model in C++, collect spot prices and AMM exchange data from the Ethereum blockchain and fit the model parameters to the observed prices. 
The estimates derived from our model closely match empirical observations and provide a natural theoretical explanation for several fundamental questions in the blockchain ecosystem.
\end{abstract}

\keywords{Automatic Market Makers \and Ethereum \and CEX-DEX arbitrage \and Stochastic Price Model \and Constant Product Market Maker}

\section{Introduction}
In the current implementation of the proof of stake (PoS) consensus protocol of Ethereum each slot is assigned to a single validator to propose a block, but in practice validators often resell this right at surprisingly high prices to so-called block builders. 
Therefore, block building and proposing are distinct processes in PoS Ethereum, hence the name proposal–builder separation (PBS). 
Today, one of the central questions in the Ethereum ecosystem is why block builders are willing to pay so much for this right, and more importantly, where this revenue originates and how ‘dark’ it is, that is, to what extent it derives from harming unsuspecting users. 
MEV (Maximal Extractable Value) denotes the profit that can be captured by including, excluding, or reordering transactions within a block.
The most commonly observed forms of MEV are sandwich and liquidations attacks and arbitrage trades. 
Among these, only arbitrage is considered benign. It is regarded as a necessary mechanism for automated market makers (AMMs)~\cite{hanson2007logarithmic} to keep exchange rates aligned with the external market.

Initially, MEV was believed to consist primarily of sandwich attacks, atomic (DEX–DEX) arbitrage, and back-running attacks -- strategies that are relatively easy to detect on-chain and often yield conspicuously large profits. 
However, as these events are rare, their aggregate contribution is limited; see rows 3 and 4 of Table~\ref{table:comp}. 
In contrast, in the case of a CEX–DEX arbitrage only the DEX leg is on-chain, making it non-trivial to distinguish genuine arbitrage from noise trades. 
This perception aligned with theoretical models~\cite{amm1,amm3,nezlobin2025loss}, which predicted only modest CEX–DEX arbitrage profits and volumes (see row 5 of Table~\ref{table:comp}). 
More recent empirical studies have demonstrated that CEX–DEX arbitrage plays a much larger role in MEV than previously assumed, with transaction volumes exceeding those of atomic arbitrage and sandwich attacks by several orders of magnitude~\cite{yang2025decentralization,heimbach2024non,oz2024wins} (see rows 2–4 of Table~\ref{table:comp}).
These results suggest that the true scale of CEX–DEX arbitrage is substantially greater than predicted by existing theoretical models. With this work, we aim to correct and explain this discrepancy.

\begin{table}[h]
\centering
\begin{footnotesize}
\begin{tabular}{r|l|l|cr|cr|c|}
\cline{2-8}
& &A Uniswap V2 pool in Sept. 2025 & \multicolumn{2}{c|}{USDT $\to$ WETH} & \multicolumn{2}{c|}{WETH $\to$ USDT}  & MEV
 \\
\cline{4-7}
& & \tiny{\texttt{0x0d4a11d5eeaac28ec3f61d100daf4d40471f1852}} & \#swaps & volume & \#swaps & volume & /profit \\
\cline{2-8}
\tiny{1} &\multirow{4}{*}{\rotatebox{90}{Empirical\hspace{1mm}}} & All swap transactions   & 2724 & \$3\,369\,073
& 1907 & \$3\,072\,624  &
$\sim\$2\,030$ \\
\tiny{2} &&\quad CEX-DEX arb. (heur. in \cite{wu2025measuring})
& 496  & \$970\,436 & 241  & \$1\,002\,020 & 
\\
\tiny{3} &&\quad Sandwich (avg. in \cite{wu2025measuring}) & 0.2 & \$25\,345 & 0.2 & \$64\,443 & $\$414$\\
\tiny{4} &&\quad Atomic arb. (avg. \cite{wu2025measuring}) &  3 & \$10\,515 & 2 & \$34\,562 & $\$136$\\ 
\cline{2-8}
\tiny{5} &\multirow{2}{*}{\rotatebox{90}{SDE\hspace{1mm}}} &Black–Scholes (LVR~\cite{amm1,amm3,nezlobin2025loss}) 
& $86$& \$$208\,844$ &$75$& \$$177\,961$ &\$$48$ \\ 
\tiny{6} &&Our model 
&$256$& \$$1\,881\,277$ &$251$& \$$1\,838\,985$ &\$$1\,454$ \\ 
\cline{2-8}
\end{tabular}
\end{footnotesize}
\caption{Average daily swaps and volumes, comparing all transactions with arbitrage transactions between $13-29$ September 2025 in one of the largest Uniswap V2 pools by TVL (\$$86$M).
For context, the total DEX trading volume during this period was roughly \$$462$B, with Uniswap alone accounting for \$$106$B. 
The upper part of the table reports empirical measurements, while the lower part presents theoretical estimates based on various stochastic differential equation (SDE) models for CEX–DEX arbitrage. 
Our model explicitly involves price jumps, with parameters fitted from the exchange rate time series (see \Cref{sec_fitting}). 
Empirical MEV was estimated by considering only single-swap transactions (ignoring bundles), summing their priority fees and direct coinbase transfers, and then scaling the result proportionally to account for bundle transactions.
}
\label{table:comp}
\end{table}

Table~\ref{table:comp} compares empirical measurements 
 with theoretical estimates of swap volumes and counts for a Uniswap V2 ETH–USDT pool (one of the largest by total value locked, TVL).
The theoretical estimates are derived from a stochastic model describing the price process of AMMs and the arbitrageurs' response to CEX-DEX mispricing. 
Since such models can only be applied to a specific trading pair, the empirical data from~\cite{wu2025measuring} was filtered to this pool.
Roughly speaking, the theoretical estimates represent the expected frequency and volume of arbitrage transactions implied by the SDE on an infinite-horizon. 
We argue that accurate stochastic models should yield estimates consistent with empirical observations.

Most previous theoretical work tackled the problem with simplified tools and derived closed-form equations to describe losses in AMMs and arbitrage profits~\cite{amm1,amm3,nezlobin2025loss}. 
The trade-off for this computational simplicity lies in that these AMM models are essentially all based on the assumption that price trajectories are continuous and that the spot price of the asset under consideration can only change by a small amount in a small interval. 
The backbone of these frameworks is the Black–Scholes (BS) price model developed in the 1970s. 
While BS is a commonly used workhorse in financial mathematics, many believe that arbitrage activity tends to emerge precisely at price jumps. 
However, incorporating price jumps into AMM models requires more sophisticated approaches than a two-parameter SDE can provide.

The main contribution of this paper is an extended discrete-time AMM model that integrates both a diffusive component and stochastic jumps, while accommodating arbitrary noise distributions. 
Specifically, we employ a discrete-time general SDE and show that the stationary probability distribution (SPD) can be computed via function iteration that converges with geometric speed. 
We prove that the state space of the process under consideration is a small set, which implies that the misprice process is an ergodic Markov chain. 
Our extension can be reduced to the BS model, the Merton jump diffusion SDE~\cite{merton} or the Ornstein-Uhlenbeck SDE~\cite{uomotion} when the model parameters are chosen accordingly.

After calibrating the model parameters to match the statistical properties of real-life data, the SDE produces estimates similar to the observed values. 
In essence, by determining seven parameters and running multiple rounds of numerical integration, we obtained an estimated arbitrage volume that was roughly twice than that observed on-chain.
Note that our model (as well as prior ones~\cite{amm1,amm3,nezlobin2025loss}) excludes noise traders and assumes the arbitrageur trades with high probability whenever the CEX–DEX price gap is sufficiently large. 
In other words, a significant proportion of the 24-hour trading volume appears to be driven by the pool’s need to track the CEX price\footnote{This implies that the 1-day vol/TVL metric greatly depends on the exchange-rate dynamics of the token pair and the DEX fee. Interestingly, for Uniswap V2 pools between a given currency and various stablecoins, this metric takes similar values regardless of the pool’s TVL (see \url{https://app.uniswap.org/explore/pools}).}. Thus, in terms of volume, noise traders—agents who swap irrespective of the CEX price—play only a minor role (they are likewise omitted in the models of~\cite{amm1,amm3,nezlobin2025loss})

In empirical studies, swaps classified as CEX–DEX arbitrage by strict heuristics, e.g. MEV bots run by searchers. 
These heuristics exclude transactions seen in the public mempool, require that the trade be the first swap in its direction within a given pool and a block, and filter out atomic MEV strategies such as sandwiches, DEX-DEX arbitrage and liquidation attacks. 
Additional filters remove order flow auction (OFA) backruns, transactions routed through known smart contracts or labeled bots, and swaps involving NFTs~\cite{heimbach2024non,wu2025measuring}.

Our theoretical results suggest that a significant portion of the remaining swap volume is also executed by (hidden) arbitrageurs.
However, when looking at transaction counts, the observed number of swaps is higher than predicted by our theoretical model, which assumes that each arbitrage opportunity is exploited by a single arbitrageur.
Note that the specific actor may differ across instances; from the perspective of the model, the identity of the arbitrageur is irrelevant.
This discrepancy can be explained by the presence of many arbitrageurs in the network--each utilising slightly different CEX prices--as well as by the widening of the CEX bid–ask spread following price jumps. As a result, the CEX–DEX price gap is sometimes closed not by a single transaction, but through a small sequence of transactions in practice.

The paper is organised as follows. 
First, Section~\ref{sec:background} provides a brief overview of related work, then the considered price dynamics is discussed.
Next, in Section~\ref{sec:invariantprobability} the invariant probability distribution of the misprice process is given in terms of a function iteration formula. 
A numerical solver to determine the invariant distribution is described in Section~\ref{sec:numericalresults}, where we also present measurements for cross-validation of our findings.
We conclude our work in Section~\ref{sec:conclusions}.

\section{Stochastic Price Models}
\label{sec:background}

First, let us review some of the SDEs utilised in previous works. 
~\cite{g3mnassib} derives its result relying on an $n$-dimensional Brownian motion, i.e. $\frac{dS}{S} = \mu_tdt+dW_t$ for $n$ assets, using the usual definition for the model parameters.
~\cite{amm1} assumes that the CEX price is observable, and evolves exogenously according to a geometric Brownian motion (GBM) that is a continuous $Q$-martingale. Their SDE is $\frac{dP_t}{P_t} = \sigma_t dB_t^{\mathbb{Q}}$, with $B_t^{\mathbb{Q}}$ being the Brownian motion. 
In financial mathematics, a $Q$-martingale often refers to processes defined under the risk-neutral measure ($Q$)~\cite{delbaen2006mathematics} and its future expected value equals its current value, given the past information. 
The case without a price trend is particularly convenient (notice that $\mu_t$ is missing from the aforementioned SDE), as it allows for the analysis of the system's asymptotic behavior.

Studies ~\cite{Cartea2022Decentralised,Cartea2023PredictableLosses,amm3} abandon the martingale assumption and extends the model by using the SDE $\frac{dP_t}{P_t} = \mu dt + \sigma dB_t$ to allow for the presence of price trends ($\mu$), deriving formulas for this more general case.
Their SDE is not governed by a driftless geometric Brownian motion, but by the full Black–Scholes model~\cite{blackscholes}.
It is important to note that the Black–Scholes model has been a cornerstone of quantitative finance since the 1970s.
However, more sophisticated models have since emerged, offering better alignment with observed market behaviour.
In particular, the Black–Scholes model fails to capture several important stylised facts about log-price processes, such as the presence of heavy tails~\cite{cont}. 

As emphasised in~\cite{merton}, the Black-Scholes model is only valid when trading takes place in continuous time and the price process has continuous paths. 
In other words, the exchange rate has a "local" Markov property, namely that it cannot change significantly in a short time interval. 
The conjecture of the current work is that the assumption that trading takes place in continuous time and that the price trajectories are continuous is misleading. 
To underpin our suspicion, we analysed two months of trading data provided by the Binance Spot price API between November and December 2024 and found that in some cases, the amplitude of price jumps in the ETH price can be two orders of magnitude larger than the instantaneous volatility. 
For a concrete example, we point out the jump that occurred at 2024-12-09 22:06:29, when the spot price changed from $3542$ to $3640$ in a second. 
Note that this was not the only occurrence of a huge jump in the spot price of ETH.
This clearly indicates that prior works relying on the continuous path assumption cannot accurately describe the interaction between the arbitrageurs and the DEXs. 
In this work, we fill this research gap and investigate the effects of discontinuous price trajectories. 
In a somewhat similar fashion to our objectives, Dewey et al. in~\cite{Dewey2023CFMM} use a Merton jump diffusion SDE in the following form: $ \frac{dp_t}{p_t} = (\mu_D - \lambda_J k) \, dt + \sigma_D \, dB_t + (y_t - 1) \, dN_t $ (refer to the original paper for the definitions) and investigate AMM dynamics with this more comprehensive price model.
Regarding the generality of the proposed mathematical framework, our model covers the two-parameter BS model, the Merton jump diffusion SDE and the widely used mean-reverting Ornstein-Uhlenbeck~\cite{uomotion} SDE as well.

\subsection{Price dynamics}\label{mathematicalmodel}
\label{sec:pricedynamics}
\begin{table}[ht]
\begin{small}
\centering
\begin{minipage}{0.45\textwidth}
\begin{tabular}{| l | l |}
\hline
\textbf{Notation} & \textbf{Explanation} \\
\hline
$X_n$ & CEX price at time $n$ \\
$\sigma$, $\mu$ & Volatility and drift \\
$\epsilon$ & Noise \\
$Z_n$ & Jump indicator\\
$U_n$ & Jump amplitude \\
$q$  & Jump probability \\
$p$  & Arbitrage trade probability \\
\hline
\end{tabular}
\end{minipage}%
\hspace{1cm}
\begin{minipage}{0.45\textwidth}
\begin{tabular}{| l | l |}
\hline
\textbf{Notation} & \textbf{Explanation} \\
\hline
$\hat{X_n}$ & Auxillary process \\
$\tilde{X_n}$ & AMM dynamics \\
$\gamma_{-},\gamma_{+}$ & Fee levels \\
$u()$ & Jump amplitude distribution \\
$f()$ & Noise density\\
$f_k()$ & $k$-th iterated density \\
$f^{*}()$ & Invariant density \\
\hline
\end{tabular}
\end{minipage}
\end{small}
\caption{Notations used in this work. }
\label{tab:notations}
\end{table}
Next, we introduce the equations to be then used to describe and analyse arbitrage opportunities and arbitrageur profits while considering a large class of price models. Our notational convention is presented in Table~\ref{tab:notations}. 
We choose seconds as a unit to measure time, mostly because the block time of Ethereum is 12s, therefore it is more convenient to convert all the model parameters to seconds and use them accordingly. The results in the tables are converted back to daily values.

First, let us set the conditions of our mathematical model. 
At time $n\in\mathbb{N}$ the exchange rate of a risky asset (e.g. ETH), against a stablecoin (e.g. USDT or USDC ) is $P_{n}>0$ at a CEX. We define $X_n:=\ln(P_n)$.
The DEX price is usually different from this due to the passive nature of traditional on-chain AMMs. 

At time $n$ it is expressed as $\tilde{P}_{n}>0$. 
The (logarithmic) mispricing~\cite{amm3} at time $n$ is then defined as $\tilde{X}_{n}:=\ln(P_{n}/\tilde{P}_{n})$, $n\in\mathbb{N}$. 
We treat $X_n,\tilde{X}_n$ as a stochastic processes, i.e. $X_{n}$, $n\in\mathbb{N}$ are a sequence of random variables on a fixed probability space $(\Omega,\mathcal{F},P)$ with the usual definitions. 
Expectation of a random variable $Y$ will be denoted as $E[Y]$.

\begin{remark}{\rm 
Throughout this work we work under the objective (statistical) probability $P$, and \emph{not} under a risk-neutral probability. 
Risk-neutral measures are convenient tools for derivative pricing, but they are artifacts without statistical meaning, and hence not suitable for empirical analysis based on exchange rate data. 
Moreover, risk-neutral probabilities are typically well-defined only on finite horizons, whereas our results concern asymptotic properties (e.g., the limiting distribution of mispricing) as the time horizon tends to infinity. 
In this context, the objective probability is the only meaningful choice.
}
\end{remark}

\subsection{Recursive Description}
We now describe the random movement of the CEX price process $X_n$, $n\in\mathbb{N}$.
We choose to describe our stochastic processes in a discrete fashion, because computer simulations proceed in discrete time steps anyway, so the following recursive description is employed:
\begin{equation}\label{evolve}
X_{n+1}-X_{n}=\mu(X_{n})+\sigma(X_{n})\varepsilon_{n+1}+Z_{n+1}U_{n+1},\ n\in\mathbb{N}.
\end{equation}
In other words, the CEX price process is governed by both diffusive and jump terms. 
The first two terms on the right-hand side of \eqref{evolve} are the \emph{diffusive} terms:
the function $\mu:\mathbb{R}\to\mathbb{R}$ is the \emph{drift} and $\sigma:\mathbb{R}\to\mathbb{R}_{+}$ is the \emph{volatility}. 
$\varepsilon_{n}$, $n\in\mathbb{N}$ is an i.i.d. noise term.
We assume that the functions $\mu,\sigma$ are (Borel)-measurable and the common law of the $\varepsilon_{n}$, $n\in\mathbb{N}$ is absolutely continuous with density function $f(x)$. 

The third term on the right-hand side of \eqref{evolve} represents random \emph{jumps}.
We assume that $Z_{n}$ is an i.i.d. $\{0,1\}$-valued sequence, indicating whether a jump occurs at time $n$ (that is, $Z_{n}=1$) or not ($Z_{n}=0$). 
We assume that $q:=P(Z_{0}=1)<1$.
The \emph{jump sizes} $U_{n}$, $n\in\mathbb{N}$ are i.i.d. with absolutely continuous law and the corresponding density function is $u(x)$. 

The initial value $X_{0}$ can be either deterministic or random, but we always assume that $X_{0}$, $(\varepsilon_{n})_{n\in\mathbb{N}}$, $(Z_{n})_{n\in\mathbb{N}}$ and $(U_{n})_{n\in\mathbb{N}}$ are independent. 
This immediately guarantees that $X_{n}$ is a homogeneous Markov chain: from the current state $X_{n}=x$ the next state $X_{n+1}$ is calculated as $x+\mu(x)+\sigma(x)\varepsilon_{n+1}+Z_{n+1}U_{n+1}$, using the i.i.d. noise sequence $(\varepsilon_{n},Z_{n},U_{n})$ and no information is carried over from the past.

\begin{remark}{\rm The interpretation of \eqref{evolve} is simple: the process has small, ``diffusive'' movements, but unexpected jumps are also allowed to occur. 
We believe that this extension enables one to describe highly volatile cryptocurrencies more accurately, see for example~\cite{Dewey2023CFMM}.
The term $Z_{n+1}U_{n+1}$ is the discrete-time analogue of a compound Poisson process (or, more generally, of a L\'evy process). 
Indeed, let $V$ be the first time a jump occurs (that is, the first $n$ such that $Z_{n}=1$). 
Clearly, the random variable $V$ has a geometric distribution, which is the discrete-time analogue of the exponential distribution that appears between two consecutive jump times of a Poisson process.}
\end{remark}

\begin{remark}{\rm The model \eqref{evolve} is the discrete-time counterpart of the continuous-time jump-diffusion model (considered e.g.\ in \cite{merton,hanson}) which satisfies the following stochastic integral equation:
\begin{equation}\label{diffusion}
X_{t}=X_{0}+\int_{0}^{t}\mu(X_{s})\, ds+\int_{0}^{t}\sigma(X_{s})dW_{s}+\sum_{j=1}^{N_{t}}\xi_{j},
\end{equation}
where $W_{t}$ is the Brownian motion, $N_{t}$ is a Poisson process and $\xi_{j}$ are i.i.d.\ jump amplitudes. 
$(W_t)_{t\in\mathbb{R}_+},(N_t)_{t\in\mathbb{R}_+},(\xi_{n})_{n\in\mathbb{N}}$ are independent. 
Note, however, that the discretisation of \eqref{diffusion} would result in Gaussian $\varepsilon_{n}$ while in \eqref{evolve} we do not restrict the noise to be of Gaussian distribution.}
\end{remark}

\begin{example}\label{exo}
{\rm When $\mu,\sigma$ are constants, $\varepsilon_{i}$ are standard Gaussian, $P(Z_{0}=0)=1$ (that is, no jumps) then we obtain a \emph{random walk} model~\cite{Bachelier1900}, the discrete-time version of geometric Brownian motion that was meticulously analysed in ~\cite{amm3}. 

When $\mu(x):=-\alpha(x-\theta)$ for some $0<\alpha<2$ and $\theta\in\mathbb{R}$, $\sigma$ is constant and $P(Z_{0}=0)=1$, we are working with an \emph{autoregressive} model, the discrete-time analogue of an Ornstein-Uhlenbeck process~\cite{uomotion}.
One can use more complex $\mu,\sigma$ functions as it usually happens in the field of quantitative finance, see the monograph \cite{ek}. 
We stress once more that $\varepsilon_{n}$ may be non-Gaussian. 
In fact, empirical evidence shows that $\varepsilon_{n}$ has (much) fatter tails than a regular Gaussian distribution, see \cite{cont}, hence realistic models \emph{are} non-Gaussian. 
In Section \ref{sec:statisticaltests} we will evaluate the logarithmic returns of several cryptocurrency pairs and perform statistical tests to demonstrate that the distribution of $\varepsilon_{n}$ is far from Gaussian when considering real-life exchange rates.

If $0<P(Z_{n}=1)<1$ then there is a nontrivial jump component in the discrete price process. 
We do not consider the case $P(Z_n=1)$ (when there is always a jump), which slightly simplifies the ensuing proof.

In continuous time, such models (more generally, L\'evy-process based models) enjoy popularity,
see \cite{cont-tankov,ek}. 
Adding jumps is a possible way of reconciling empirical evidence with theoretical models in quantitative finance, consult~\cite{cont} again and refer to Table~\ref{table:hourlylogreturns} below for statistical tests on cryptocurrencies. }
\end{example} 

We will now shift our focus to examining the effects of arbitrageurs' presence in the market.
Let $\gamma_{+},\gamma_{-}>0$ denote the exchange fee in units of log price for buying and selling tokens in the AMM following the notations of~\cite{amm3}.
The arbitrageur trades against the AMM as follows. 
At each moment $n$, a myopic arbitrageur (maximising short-term profits) arrives at the AMM with probability $0<p\leq 1$. 
When the logarithm of the misprice is outside $[-\gamma_{-},\gamma_{+}]$ then the arbitrageur sets it back to the closest boundary point (that is, either to $\gamma_{+}$ or to $-\gamma_{-}$) thus making
a profit. 
Otherwise no action is taken. 
Note that when there is no arbitrage activity against the AMM, the misprice process is identical to the price process. 
Consequently, the logarithm of the misprice continues to evolve according to~\eqref{evolve}.

Rigorously this can be formulated as follows: let $A_{n}$, $n\in\mathbb{N}$ be an i.i.d.\ sequence of $\{0,1\}$-valued random variables with $0<P(A_{0}=1)=p\leq 1$. 
We assume that $X_{0}$, $U_{n}$, $\varepsilon_{n}$, $Z_{n}$, $A_{n}$ are independent. 
$A_{n}=1$ means that there is an arbitrageur arriving at discrete time $n$. 
Let us introduce the stochastic process $\tilde{X}_{n}$, $n\in\mathbb{N}$ to describe the dynamics of the mispricing between the CEX and DEX prices, with arbitrageurs taken into account.
We fix $\tilde{X}_{0}$ (for instance $\tilde{X}_0=0$) and recursively define, for $n\in\mathbb{N}$,
\begin{eqnarray}\label{tildex}
\hat{X}_{n+1} &:=& 1_{\{\tilde{X}_{n}>\gamma_{+}\}}A_{n+1}\gamma_{+}
-1_{\{\tilde{X}_{n}<-\gamma_{-}\}}A_{n+1}\gamma_{-}\\ \nonumber
&+& (1-A_{n+1})\tilde{X}_{n}1_{\{\tilde{X}_{n}\notin [-\gamma_{-},\gamma_{+}]\}}+
\tilde{X}_{n}1_{\{\tilde{X}_{n}\in [-\gamma_{-},\gamma_{+}]\}},\\  \label{tildey}
\tilde{X}_{n+1} & :=& \hat{X}_{n} + \mu({X}_{n})+\sigma({X}_{n})\varepsilon_{n+1}+Z_{n+1}U_{n+1}.
\end{eqnarray}
The expressions above can be explained as follows: when an arbitrageur arrives at time $n$ and the current price $\tilde{X}_n$ is above $\gamma_{+}$ then the arbitrageur sets it back to $\gamma_{+}$; if the current price is below $-\gamma_{-}$ then the price gets set back to $-\gamma_{-}$; if no arbitrageur arrives then no action is taken even if the process $\tilde{X}_{n}$ is outside $[-\gamma_{-},\gamma_{+}]$. 
When the process is \emph{inside} $[-\gamma_{-},\gamma_{+}]$ then nothing happens no matter what. 
In such a way we arrive at the value $\hat{X}_{n}$ as a result of arbitrage trades. 
After this, the mispricing process evolves following the stochastic dynamics of the CEX price, \eqref{evolve}. 
That is, $\tilde{X}_{n+1}$, the value of the new misprice at time $n+1$, can be calculated from $\hat{X}_n$ in the same way as $X_{n+1}$ was calculated from $X_n$ in \eqref{evolve}.

One can readily check that when $\{A_{1}=0,\ldots,A_{n}=0\}$ and $X_0=\tilde{X}_0$, one has $X_{k}=\tilde{X}_{k}$, $0\leq k\leq n$, that is, without the intervention of arbitrageurs \eqref{tildex} reduces to the original CEX price evolution \eqref{evolve}. 
Owing to the fact $(X_n,\tilde{X}_{n})$ is being generated from an i.i.d.\ noise sequence $(A_n,Z_n,U_n,\varepsilon_n)$ with a recursion, $(X_n,\tilde{X}_{n})$ is, again, a (homogeneous) Markov chain.

\section{Invariant Probability Distribution}
\label{sec:invariantprobability}

Now we make some technical assumptions about various components of the mispricing model 
$\tilde{X}_n$ described in the previous section.

\begin{assumption}\label{bou}
$\mu(x)=\mu\in\mathbb{R}$ and $\sigma(x)=\sigma>0$ are constants.
\end{assumption}

This assumption guarantees that the incoming noise always enters the system with a non-vanishing coefficient 
$\sigma$, in other words, there is enough randomness to make the system ergodic. Under this Assumption,
$\tilde{X}$ itself is a Markov process, as easily seen.

Finally, some regularity of the density $f$ is required. 
The latter assumption holds, in particular, if $f$ is continuous and positive everywhere, which is, again, a mild requirement that is fulfilled in all imaginable models in practice.

\begin{assumption}\label{fou} Let $K:=|\mu|+\sigma$.
Let $H:=(2\max\{\gamma_{+},\gamma_{-}\}+K)/c_{0}$. The density $f$ is bounded away from $0$ in the following sense:
$$
c_{1}:=\inf_{|x|\leq H}f(x)>0.
$$
\end{assumption}

Now let us calculate the conditional density of the CEX price increments, expressed as the convolution of a rescaled density and the corresponding jump term. 

\begin{lemma}\label{density1}
The law of $X_{n+1}-X_{n}$ conditionally to $X_{n}=x$ is absolutely continuous with (conditional) density
$$
h(x,y):=(1-q)\frac{1}{\sigma}f\left(\frac{y-\mu}{\sigma}\right)+{}
q\int_{\mathbb{R}}u(y-z)\frac{1}{\sigma}f\left(\frac{z-\mu}{\sigma}\right)\, dz,
$$	
that is, for all Borel sets $A$,
$$
P(X_{n+1}-X_n\in A|X_n=x)=\int_A h(x,y)\, dy.
$$
\end{lemma}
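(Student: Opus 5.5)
Under Assumption~\ref{bou} we have $X_{n+1}-X_n=\mu+\sigma\varepsilon_{n+1}+Z_{n+1}U_{n+1}$. The plan is to compute the law of this increment conditionally on $X_n=x$ directly, using independence and a split on the jump indicator.

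\textbf{Step 1: remove the conditioning.} By construction $X_n$ is a measurable function of $X_0$ and $(\varepsilon_k,Z_k,U_k)_{k\le n}$. Hence, by the standing independence assumptions, $X_n$ is independent of the triple $(\varepsilon_{n+1},Z_{n+1},U_{n+1})$. The conditional law of the increment given $X_n=x$ is therefore just the unconditional law of $\mu+\sigma\varepsilon_{n+1}+Z_{n+1}U_{n+1}$. This also explains why $h(x,y)$ does not actually depend on $x$. Formally, I would invoke the standard freezing lemma: if $Y$ is independent of $\mathcal{G}$ and $X$ is $\mathcal{G}$-measurable, then $E[\phi(X,Y)\mid X]=\Phi(X)$ with $\Phi(x)=E\phi(x,Y)$.

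\textbf{Step 2: split on $Z_{n+1}$.} For a Borel set $A$, the law of total probability gives
\[
P(\mu+\sigma\varepsilon_{n+1}+Z_{n+1}U_{n+1}\in A)=(1-q)\,P(\mu+\sigma\varepsilon_{n+1}\in A)+q\,P(\mu+\sigma\varepsilon_{n+1}+U_{n+1}\in A).
\]
Here we use that $Z_{n+1}$ is independent of $(\varepsilon_{n+1},U_{n+1})$, so conditioning on $Z_{n+1}=0$ or $Z_{n+1}=1$ does not alter the joint law of the other two variables.

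\textbf{Step 3: identify each density.} For the first term, the change of variables $y=\mu+\sigma e$ (valid since $\sigma>0$) shows that $\mu+\sigma\varepsilon_{n+1}$ has density $g(z):=\sigma^{-1}f((z-\mu)/\sigma)$. For the second term, $\mu+\sigma\varepsilon_{n+1}$ and $U_{n+1}$ are independent with densities $g$ and $u$. Their sum therefore has the convolution density $\int u(y-z)g(z)\,dz$, which follows from Fubini--Tonelli applied to $\int\int 1_A(z+w)g(z)u(w)\,dz\,dw$ followed by the substitution $w=y-z$.

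Combining the two pieces yields $\int_A h(x,y)\,dy$ for every Borel $A$, which is the claim. Since all integrands are nonnegative, Tonelli justifies every interchange without integrability concerns.

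The only mildly delicate point is Step 1, namely the rigorous justification that conditioning on $X_n=x$ may be replaced by substitution. This holds by independence, and no other obstacle is expected.
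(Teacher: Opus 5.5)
Your proof is correct and follows the same route as the paper: split on the jump indicator $Z_{n+1}$, identify the rescaled noise density $\sigma^{-1}f((y-\mu)/\sigma)$, and obtain the jump case as its convolution with $u$. Your Step 1 makes explicit what the paper leaves implicit. There you use the independence of $X_n$ from $(\varepsilon_{n+1},Z_{n+1},U_{n+1})$ to replace conditioning on $X_n=x$ by the unconditional law.
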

Next, we turn our attention to the Markov process $\tilde{X}_n$ and determine its transition density.

\begin{lemma}\label{density2}
For each Borel set $A$,
$$
P(\tilde{X}_{n+1}\in A|\tilde{X}_n=x)=
\int_A q(x,y)\, dy,
$$
where the transition density $q(x,y)$ is given by
$$
q(x,y)=h(x,y-x)(1-p)+h(\tilde{x},y-\tilde{x})p.
$$
Here we use the notation
$$
\tilde{x}:=x1_{\{x\in [-\gamma_{-},\gamma_{+}]\}}+\gamma_{+}1_{\{x>\gamma_{+}\}}-\gamma_{-}1_{\{x<-\gamma_{-}\}}.
$$
\end{lemma}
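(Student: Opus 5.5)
We prove Lemma~\ref{density2} by conditioning on the arbitrage indicator $A_{n+1}$ and then applying Lemma~\ref{density1} to the increment. Under Assumption~\ref{bou} the increment $\mu+\sigma\varepsilon_{n+1}+Z_{n+1}U_{n+1}$ in \eqref{tildey} does not depend on the state. The map $x\mapsto\tilde{x}$ is exactly the "reset" operation performed by an arriving arbitrageur.

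\textbf{Step 1: rewrite the post-arbitrage value.} Reading off \eqref{tildex}, we have $\hat{X}=\tilde{X}_n$ on $\{A_{n+1}=0\}$. On $\{A_{n+1}=1\}$, the indicator terms give $\hat{X}=\tilde{\tilde{X}_n}$: it equals $\gamma_+$ above the band, $-\gamma_-$ below it, and $\tilde{X}_n$ inside it. Both cases can be written compactly as
\[
\hat{X}=(1-A_{n+1})\tilde{X}_n+A_{n+1}\tilde{x}\big|_{x=\tilde{X}_n}.
\]
A small indexing point must be settled here. We read \eqref{tildey} as $\tilde X_{n+1}=\hat X_{n+1}+\mu+\sigma\varepsilon_{n+1}+Z_{n+1}U_{n+1}$, which matches the verbal description, and we note that under Assumption~\ref{bou} the arguments of $\mu,\sigma$ are irrelevant.

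\textbf{Step 2: independence and conditioning.} Write $D_{n+1}:=\mu+\sigma\varepsilon_{n+1}+Z_{n+1}U_{n+1}$. The variables $A_{n+1}$ and $D_{n+1}$ are functions of the noise at time $n+1$. That noise is independent of $\tilde X_n$, which is a function of $\tilde X_0$ and the noise up to time $n$, and $A_{n+1}$ is independent of $D_{n+1}$. Hence, by the standard freezing (substitution) lemma for independent variables,
\[
P(\tilde X_{n+1}\in A\mid \tilde X_n=x)=(1-p)\,P(x+D_{n+1}\in A)+p\,P(\tilde x+D_{n+1}\in A).
\]

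\textbf{Step 3: apply Lemma~\ref{density1}.} The law of $D_{n+1}$ is that of $X_{n+1}-X_n$ given $X_n=z$, for any $z$, and it has density $h(z,\cdot)$. Here $h$ does not actually depend on its first argument, but we keep the notation of the statement. A change of variables $y=c+d$ gives
\[
P(c+D_{n+1}\in A)=\int_A h(c,y-c)\,dy,
\]
which we use with $c=x$ and $c=\tilde x$. Summing the two terms with weights $1-p$ and $p$ yields $q(x,y)$.

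The only delicate point is Step 2: justifying that conditioning on $\{\tilde X_n=x\}$ may be performed by substitution. This is handled by the independence of $(A_{n+1},\varepsilon_{n+1},Z_{n+1},U_{n+1})$ from $\sigma(\tilde X_0,\text{noise}_{\le n})$, together with measurability of $x\mapsto\tilde x$. The remaining steps are routine.
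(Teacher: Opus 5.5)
Your proof is correct and takes the same route as the paper, whose proof just says the lemma follows directly from Lemma~\ref{density1} and the recursions \eqref{tildex}, \eqref{tildey}. You have written out what the paper leaves implicit: splitting on $A_{n+1}$, the independence (freezing) argument, and shifting the density $h$; your remarks on the index in \eqref{tildey} and on the role of Assumption~\ref{bou} are also appropriate.
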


Let $\mu_0$ denote the law of $\tilde{X}_0$. For instance, if $\tilde{X}_0$ is a constant $w$ then $\mu_0=\delta_w$, the probability concentrated on the one-point set $\{w\}$.

\begin{corollary}\label{cor} Whatever $\tilde{X}_{0}$ is, the random variables $\tilde{X}_{k}$, $k\geq 1$ have laws that admit density functions
$f_{k}(x)$ given by
$$
f_1(x)=\int_{\mathbb{R}}q(x_0,x)\mu_0(dx_0)
$$
and by the recursion
\begin{equation}
f_{k+1}(x)=\int_{\mathbb{R}}q(x_{0},x)f_{k}(x_{0})\, dx_{0},\ k\geq 1.
\label{eq:functioniteration}
\end{equation}
\end{corollary}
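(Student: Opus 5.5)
The plan is to derive the corollary directly from Lemma~\ref{density2} together with the Markov property of $\tilde{X}$ (valid under Assumption~\ref{bou}), using the Chapman--Kolmogorov relation and Fubini's theorem. Note that the transition density $q(x,y)$ is jointly Borel measurable in $(x,y)$. It is built from $h$, which here does not depend on its first argument because $\mu,\sigma$ are constant, composed with the measurable maps $(x,y)\mapsto y-x$ and $x\mapsto\tilde{x}$. It is also nonnegative. So all integrals below are well defined in $[0,\infty]$, and Tonelli's theorem applies without integrability worries.

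\emph{Base case.} Fix a Borel set $A$. Conditioning on $\tilde{X}_0$ and using Lemma~\ref{density2} with $n=0$ gives
$$P(\tilde{X}_1\in A)=\int_{\mathbb{R}}P(\tilde{X}_1\in A\mid \tilde{X}_0=x_0)\,\mu_0(dx_0)=\int_{\mathbb{R}}\int_A q(x_0,x)\,dx\,\mu_0(dx_0).$$
Exchanging the integrals by Tonelli yields $\int_A f_1(x)\,dx$ with $f_1(x)=\int q(x_0,x)\mu_0(dx_0)$. Taking $A=\mathbb{R}$ shows that $f_1$ integrates to $1$. Hence $f_1$ is finite almost everywhere and is a genuine probability density; on the null set where it is infinite it may be redefined arbitrarily. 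This part needs no density for $\tilde{X}_0$, which is why $\mu_0$ is arbitrary ("whatever $\tilde{X}_0$ is").

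\emph{Inductive step.} Suppose $\tilde{X}_k$ has density $f_k$ for some $k\ge1$. The same computation, conditioning on $\tilde{X}_k$ and using Lemma~\ref{density2} with $n=k$, gives
$$P(\tilde{X}_{k+1}\in A)=\int_{\mathbb{R}} f_k(x_0)\int_A q(x_0,x)\,dx\,dx_0=\int_A\Big(\int_{\mathbb{R}} q(x_0,x)f_k(x_0)\,dx_0\Big)dx.$$
This establishes \eqref{eq:functioniteration}, and induction finishes the proof.

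The only point requiring care is the conditioning step. We must justify that $P(\tilde{X}_{n+1}\in A\mid\tilde{X}_n)=\int_A q(\tilde{X}_n,y)\,dy$ holds as a regular conditional probability. This follows because $\tilde{X}_{n+1}=G(\tilde{X}_n,A_{n+1},\varepsilon_{n+1},Z_{n+1},U_{n+1})$ for a measurable map $G$, and the noise $(A_{n+1},\varepsilon_{n+1},Z_{n+1},U_{n+1})$ is independent of $\tilde{X}_n$. By the freezing lemma for independent variables, the conditional law is obtained by fixing $\tilde{X}_n=x$, which is exactly the content of Lemma~\ref{density2}. I expect this measurability and independence bookkeeping to be the main, though modest, obstacle; everything else is Tonelli.
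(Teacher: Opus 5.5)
Your proposal is correct and follows the same route as the paper. Condition on the previous state, integrate the transition density $q$ from Lemma~\ref{density2} against the law of $\tilde{X}_0$ (resp.\ against $f_k$), exchange the order of integration, and iterate; the paper's two-line proof is exactly this, and your additions (joint measurability of $q$, Tonelli, the freezing lemma, and a.e.\ finiteness of $f_k$) only supply details it leaves implicit.
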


Now we are in a position to state our main theoretical result: for the misprice process in our model a SPD \textbf{exists}, the system under consideration tends to this invariant state from an arbitrary initialisation (at a geometric speed) and the law of large numbers also holds.

\begin{theorem}\label{main} Denote by $f_{n}$ the density function of $\tilde{X}_{n}$ for $n\geq 1$.
Let Assumptions \ref{bou} and \ref{fou} be in force. Then there is a density function $f^{*}$ such that
such that 
\begin{equation}\label{esti}
\int_{\mathbb{R}}|f^{*}(t)-f_{n}(t)|\, dt\leq C\rho^{n},\ n\in\mathbb{N},
\end{equation}
where $f^{*}$ is the invariant density of the Markov process $\tilde{X}_{n}$. 
Here $C>0$, $\rho<1$ are constants. 

For an arbitrary bounded, measurable function $\phi:\mathbb{R}\to\mathbb{R}$ one has almost sure convergence
\begin{equation}\label{lln1}
\frac{\phi(\tilde{X}_{1})+\ldots+\phi(\tilde{X}_{n})}{n}\to \int_{\mathbb{R}}\phi(x)f^{*}(x)\, dx,
\quad n\to\infty.
\end{equation}
Furthermore, one can approximate $f^{*}$ by the function iteration \eqref{eq:functioniteration} whose convergence rate is estimated by \eqref{esti}.
\end{theorem}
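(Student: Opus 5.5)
The plan is to establish a uniform Doeblin (minorisation) condition for the transition density $q(x,y)$ of Lemma~\ref{density2}: there exist $\beta>0$ and a probability density $\nu$ on $\mathbb{R}$ such that $q(x,y)\geq \beta\,\nu(y)$ for \emph{all} $x\in\mathbb{R}$ and $y\in\mathbb{R}$. This says the whole state space is a small set. From this, standard Markov chain theory gives geometric ergodicity in total variation uniformly in the initial law, and hence \eqref{esti}. Equivalently, it gives the contraction
\[
\int_{\mathbb{R}}\Bigl|\int_{\mathbb{R}} q(x_0,y)\bigl(g_1(x_0)-g_2(x_0)\bigr)\,dx_0\Bigr|\,dy\leq (1-\beta)\int_{\mathbb{R}}|g_1-g_2|
\]
for probability densities $g_1,g_2$. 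Banach's fixed point theorem in $L^1$ applied to the iteration \eqref{eq:functioniteration} then yields $f^*$ with $\|f_n-f^*\|_{L^1}\leq 2(1-\beta)^{n-1}$, so $C=2/(1-\beta)$ and $\rho=1-\beta$. Invariance of $f^*$ is just the fixed-point property. The dependence on $\mu_0$ only enters through $f_1$, which is a density by Corollary~\ref{cor}, so the contraction argument applies from $n=1$ on.

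The key step is the minorisation. Drop the jump term and use only the no-jump part of $h$, which gives
\[
q(x,y)\geq (1-q)\Bigl[(1-p)\tfrac1\sigma f\bigl(\tfrac{y-x-\mu}{\sigma}\bigr)+p\,\tfrac1\sigma f\bigl(\tfrac{y-\tilde x-\mu}{\sigma}\bigr)\Bigr].
\]
Since $p>0$, the second term alone gives $q(x,y)\geq (1-q)p\,\sigma^{-1} f((y-\tilde x-\mu)/\sigma)$. The crucial point is that $\tilde x\in[-\gamma_-,\gamma_+]$ for every $x$, because arbitrage resets the state into a bounded interval regardless of where it was. Now restrict $y$ to $I:=[-\gamma_-,\gamma_+]$. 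Then $|y-\tilde x-\mu|\leq 2\max\{\gamma_+,\gamma_-\}+|\mu|$, so the argument of $f$ stays within the window of Assumption~\ref{fou}. Reading the constant $H$ there with the obvious normalisation, this window is $|\cdot|\leq H$. Hence $q(x,y)\geq (1-q)p\,c_1\sigma^{-1}\,1_I(y)$. Taking $\nu:=1_I/(\gamma_++\gamma_-)$ and $\beta:=(1-q)p\,c_1(\gamma_++\gamma_-)/\sigma$ gives the bound. Note $\beta\leq 1$ automatically, since $\int q(x,y)\,dy=1$. The assumption $q<1$ is used exactly here.

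For the law of large numbers \eqref{lln1}, I would use the fact that a uniform Doeblin condition makes $\tilde X$ a $\nu$-irreducible, aperiodic, positive Harris recurrent chain with unique invariant law $f^*(x)\,dx$. Aperiodicity holds because the minorising set $I$ has positive $\nu$-measure and the one-step bound holds from every point, in particular from points of $I$. The strong law for Harris chains then gives almost sure convergence of the ergodic averages for bounded measurable $\phi$ and any initial distribution. Alternatively, one can proceed directly: use Nummelin splitting with regeneration times that are geometric with parameter $\beta$, and apply the classical SLLN to the i.i.d. cycle sums. The cycle lengths have exponential moments, so the bounded $\phi$ poses no integrability issue.

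I expect the main obstacle to be carefully justifying the minorisation, specifically the role of $\tilde x$ in Lemma~\ref{density2} and verifying that the range of $f$'s argument falls inside the region where Assumption~\ref{fou} gives positivity. The remaining steps are citations of standard results (e.g.\ Meyn--Tweedie): the $L^1$ contraction, the fixed point, and the Harris-chain LLN.
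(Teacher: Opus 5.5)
Your proof is correct and uses the paper's key idea. That idea is Lemma~\ref{lowerestimate} together with the small-set argument: a uniform minorisation $q(x,y)\geq \beta\,\nu(y)$ on the whole state space, with $\nu$ uniform on $[-\gamma_-,\gamma_+]$, obtained from the no-jump, arbitrage-reset term because $\tilde x\in[-\gamma_-,\gamma_+]$. (The paper's normalisation $\gamma_+-\gamma_-$ is a typo for your $\gamma_++\gamma_-$.) Where you differ is in how the standard consequences are obtained. For the geometric $L^1$ rate, you give a self-contained Doeblin contraction plus Banach fixed-point argument, which yields explicit $C,\rho$ and the density $f^*$ directly; the paper instead cites uniform ergodicity (Meyn--Tweedie, Thm.~16.0.2) and then uses Fubini to get absolute continuity. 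For \eqref{lln1}, you use the Harris-chain SLLN, whereas the paper uses its tail-triviality proposition.
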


As an application of statement \eqref{lln1}, we determine the long-term relative frequency of the mispricing process being outside the
interval $[-\gamma_-,\gamma_+]$. (This quantity was denoted by $\mathtt{P}_{trade}$ in \cite{amm3}.)

\begin{corollary}\label{proba}
Almost surely, as $n\to\infty$,
$$
\frac{\sum_{j=1}^{n}1_{\{\tilde{X}_{j}\notin [-\gamma_{-},\gamma_{+}]\}}}{n}\to \mu_{*}(\mathbb{R}\setminus [-\gamma_{-},\gamma_{+}])=
\int_{\mathbb{R}\setminus [-\gamma_{-},\gamma_{+}]}f^{*}(t)\, dt.
$$	
\end{corollary}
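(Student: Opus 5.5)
The plan is to read Corollary~\ref{proba} as a direct specialisation of the strong law of large numbers \eqref{lln1} in Theorem~\ref{main}. I would apply that result to the test function
$$
\phi(x):=1_{\{x\notin[-\gamma_{-},\gamma_{+}]\}}=1_{\mathbb{R}\setminus[-\gamma_{-},\gamma_{+}]}(x).
$$
It is then only necessary to check that $\phi$ satisfies the hypotheses of the theorem and to rewrite its limit.

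\textbf{Checking the hypotheses.} The set $\mathbb{R}\setminus[-\gamma_{-},\gamma_{+}]$ is open, hence Borel, so $\phi$ is Borel measurable. It takes only the values $0$ and $1$, so it is bounded. No integrability or continuity beyond this is required by \eqref{lln1}.

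\textbf{Applying the theorem.} Under Assumptions~\ref{bou} and~\ref{fou}, Theorem~\ref{main} gives almost sure convergence
$$
\frac{1}{n}\sum_{j=1}^{n}1_{\{\tilde{X}_j\notin[-\gamma_{-},\gamma_{+}]\}}\to\int_{\mathbb{R}}1_{\mathbb{R}\setminus[-\gamma_{-},\gamma_{+}]}(x)f^{*}(x)\,dx=\int_{\mathbb{R}\setminus[-\gamma_{-},\gamma_{+}]}f^{*}(t)\,dt .
$$
This convergence holds for any initialisation $\tilde X_0$, since Theorem~\ref{main} imposes no restriction on it.

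\textbf{Identifying the limit.} Here $\mu_{*}$ denotes the invariant probability measure of $\tilde X$, namely $\mu_{*}(dx)=f^{*}(x)\,dx$. By definition of a measure with a density,
$$
\mu_{*}\bigl(\mathbb{R}\setminus[-\gamma_{-},\gamma_{+}]\bigr)=\int_{\mathbb{R}\setminus[-\gamma_{-},\gamma_{+}]}f^{*}(t)\,dt,
$$
which is the stated limit.

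There is essentially no obstacle here. All the substantive work, namely the small-set (Doeblin) argument yielding uniform ergodicity and the ergodic theorem for the chain, sits inside Theorem~\ref{main}. The only point of care is that \eqref{lln1} is stated for bounded measurable rather than continuous $\phi$. This matters because the indicator is discontinuous at $-\gamma_{-}$ and $\gamma_{+}$, so a weak-convergence result would not suffice. Because $f^{*}$ is a density, the boundary points carry zero mass in any case, so it is also immaterial whether the interval is taken open or closed.
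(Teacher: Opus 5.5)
Your proposal is correct and matches the paper's proof, which likewise just applies the law of large numbers \eqref{lln1} of Theorem~\ref{main} to the bounded measurable indicator $\phi(x)=1_{\{x\notin[-\gamma_{-},\gamma_{+}]\}}$. Your checks of measurability and boundedness, and your identification of $\mu_{*}$ through its density $f^{*}$, are exactly the details the paper leaves implicit.
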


Here $\mu_*$ denotes the invariant measure whose density is $f^*$, to which the system converges as time tends to infinity, see \eqref{esti} above.

In \cite{amm3} the arbitrage profit \emph{rate} was calculated, that is, the profit earned in an infinitesimally small interval.  
In the present discrete-time setting, we can also determine the arbitrageur's profit in one time step.
Assume now that the system is in stationary state (that is, $\tilde{X}_{n}$ has law $\mu_{*}$ for all $n$).

We consider a CPMM with total value locked \texttt{TVL} and we assume that 
$\gamma_{+}=\gamma_{-}=:\gamma$ (see \cite{amm1} for a more detailed discussion). 

\begin{corollary}\label{longterm}
For a CPMM in stationary state, the expected profit of the arbitrageur at any given instant can be expressed using the SPD $f^{*}$ as follows:
\begin{equation*}
\texttt{TVL} \cdot p\left[e^{\gamma/2} \int_{\gamma}^\infty \Bigl(\cosh\!\bigl(\tfrac{t-\gamma}{2}\bigr) - 1\Bigr)\, f^{*}(t)\, dt
+ e^{-\gamma/2} \int_{-\infty}^{-\gamma} \Bigl(\cosh\!\bigl(\tfrac{t+\gamma}{2}\bigr) - 1\Bigr) f^{*}(t)\, dt \right]. 
\end{equation*}
\end{corollary}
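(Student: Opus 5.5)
The plan is to reduce the corollary to a deterministic one-trade computation for the CPMM and then average it against the stationary law using independence of the arrival variable. Fix an instant $n$ in stationarity, so that $\tilde X_n$ has law $\mu_*$ with density $f^*$, which exists by Theorem~\ref{main}. By construction, $A_{n+1}$ is independent of $\tilde X_n$, and a trade happens only if $A_{n+1}=1$ and $\tilde X_n\notin[-\gamma,\gamma]$. Write $\Pi(t)$ for the arbitrageur's profit when the misprice is $t$ and he brings it back to the nearest boundary point. Then the expected profit equals
$$E\bigl[A_{n+1}\Pi(\tilde X_n)\bigr]=p\int_{\mathbb{R}\setminus[-\gamma,\gamma]}\Pi(t)f^*(t)\,dt .$$
This holds because $\Pi(\tilde X_n)$ is a function of $\tilde X_n$ alone and $P(A_{n+1}=1)=p$. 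So everything reduces to showing $\Pi(t)=\texttt{TVL}\, e^{\gamma/2}\bigl(\cosh\bigl(\tfrac{t-\gamma}{2}\bigr)-1\bigr)$ for $t>\gamma$, and the mirrored formula for $t<-\gamma$.

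For the one-trade computation I will parametrise the CPMM by reserves $x$ (risky) and $y$ (stablecoin) with $xy=L^2$. Put $s=\sqrt{\tilde P}$, so that $x=L/s$ and $y=Ls$. Take $t>\gamma$, so the CEX price $P=\tilde P e^{t}$ exceeds the pool price. The arbitrageur buys the risky asset from the pool, paying the fee multiplicatively ($e^{\gamma}$ on the stablecoin input), until the fee-adjusted pool price equals the CEX price. That means moving the pool to $\tilde P'=Pe^{-\gamma}$, which sets the misprice to $\gamma$. With $s'=\sqrt{\tilde P'}$, the trade receives $L(1/s-1/s')$ risky tokens and pays $e^{\gamma}L(s'-s)$ stablecoins. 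Valuing the received tokens at $P=s'^2e^{\gamma}$, the profit is
$$\Pi=Le^{\gamma}\Bigl(\tfrac{s'^2}{s}-2s'+s\Bigr)=Le^{\gamma}\tfrac{(s'-s)^2}{s}.$$
Using $s'/s=e^{(t-\gamma)/2}$ and the identity $(e^{a}-1)^2=2e^{a}(\cosh a-1)$, this becomes $\Pi=2Ls'e^{\gamma}\bigl(\cosh\tfrac{t-\gamma}{2}-1\bigr)$. Finally, since $s'=\sqrt{P}\,e^{-\gamma/2}$, we get $2Ls'e^{\gamma}=2L\sqrt{P}\,e^{\gamma/2}=\texttt{TVL}\,e^{\gamma/2}$. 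Here $\texttt{TVL}=2L\sqrt P$ is the pool's value at the reference price.

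The case $t<-\gamma$ is handled by the mirror computation. The arbitrageur sells the risky asset into the pool, the pool moves to $\tilde P'=Pe^{\gamma}$, and the fee is paid on the risky input. The same algebra gives a squared difference of square roots, hence a $\cosh\bigl(\tfrac{t+\gamma}{2}\bigr)-1$ factor. Rewriting $2L\sqrt{\tilde P'}$ in terms of $\texttt{TVL}$ now produces the factor $e^{-\gamma/2}$. Because $\cosh$ is even, the sign inside it is immaterial. Summing the two regions gives the displayed formula.

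The step I expect to be the main obstacle is the bookkeeping of conventions, not any probabilistic issue. Three choices must be fixed consistently to reproduce exactly the prefactors $e^{\pm\gamma/2}$: how the fee $\gamma$ enters the swap (on the input side, as in \cite{amm1}), in which numeraire profit is measured, and at which price \texttt{TVL} is evaluated. I will fix $\texttt{TVL}=2L\sqrt{P}$ with the fee on the input, and then check the sell side carefully so that the asymmetric factor $e^{-\gamma/2}$ indeed appears rather than $e^{\gamma/2}$. If it does not match, I will try the alternative normalisation with \texttt{TVL} evaluated at the post-trade pool price.
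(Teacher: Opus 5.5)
Your overall structure matches the paper's proof. You use independence of $A_{n+1}$ from $\tilde X_n$ to get $p\int \Pi(t)f^*(t)\,dt$, insert the one-trade CPMM profit, rewrite it via $\cosh$, and substitute $\texttt{TVL}=2L\sqrt{P}$. The only difference is that the paper cites the one-trade formula from Lemma 2/Corollary 2 of \cite{amm3}, whereas you derive it. Your buy-side derivation is correct and gives $\texttt{TVL}\,e^{\gamma/2}\bigl(\cosh\frac{t-\gamma}{2}-1\bigr)$.

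The sell side, however, fails under the convention you fix, so your claim that it produces $e^{-\gamma/2}$ is false. With the fee charged on the risky input, the arbitrageur pays $e^{\gamma}L(1/s'-1/s)$ risky tokens, each worth $P=s'^2e^{-\gamma}$, and receives $L(s-s')$ stablecoins. The profit is then $L(s-s')^2/s=2Ls'\bigl(\cosh\frac{t+\gamma}{2}-1\bigr)$. Since $s'=\sqrt{P}\,e^{\gamma/2}$, this equals $\texttt{TVL}\,e^{+\gamma/2}\bigl(\cosh\frac{t+\gamma}{2}-1\bigr)$.

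Your fallback does not repair this. Normalising by the post-trade pool value $2Ls'$ removes the exponential factor entirely on the sell side, and it turns the buy-side factor into $e^{\gamma}$.

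The missing ingredient is the fee convention of \cite{amm3}, which the paper relies on: the fee is levied on the stablecoin leg in both directions. On the buy side this coincides with your ``fee on the input,'' which is why that half came out right. On the sell side it means the arbitrageur gives $L(1/s'-1/s)$ risky tokens and receives only $e^{-\gamma}L(s-s')$ stablecoins. The profit is then
$e^{-\gamma}L(s-s')^2/s=2Ls'e^{-\gamma}\bigl(\cosh\frac{t+\gamma}{2}-1\bigr)=\texttt{TVL}\,e^{-\gamma/2}\bigl(\cosh\frac{t+\gamma}{2}-1\bigr)$, as the corollary requires.

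The two conventions give the same no-trade band and the same post-trade pool price $Pe^{\gamma}$. Their profits differ by exactly the factor $e^{\gamma}$, which is why the choice determines the prefactor.
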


\begin{corollary}\label{longterm_volume}
For a CPMM in stationary state, the expected arbitrage trading volume at any given instant can be expressed as
\begin{equation*}
\texttt{TVL} \cdot p\left[\int_{\gamma}^\infty \left(e^{\gamma/2} - e^{-t/2+\gamma}\right) f^{*}(t)\, dt
+
\int_{-\infty}^{-\gamma} \left(e^{-\gamma/2} - e^{-t/2-\gamma}\right) f^{*}(t)\, dt \right] . 
\end{equation*}
\end{corollary}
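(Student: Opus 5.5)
The plan is to reduce the statement to a deterministic per-trade computation and then average it against the stationary law given by Theorem~\ref{main}. In stationarity $\tilde{X}_n$ has law $\mu_*$ with density $f^{*}$. By the construction \eqref{tildex}, a trade at time $n+1$ happens exactly on the event $\{A_{n+1}=1\}\cap\{\tilde{X}_n\notin[-\gamma,\gamma]\}$. Since $A_{n+1}$ is independent of $\tilde{X}_n$ and $P(A_{n+1}=1)=p$, the expected volume at a given instant is
$$
p\int_{\mathbb{R}\setminus[-\gamma,\gamma]} V(t)\, f^{*}(t)\,dt ,
$$
where $V(t)$ is the volume of the single arbitrage trade that moves the log-misprice from $t$ back to the nearest boundary point $\pm\gamma$. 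It therefore remains to compute $V(t)$ separately on $t>\gamma$ and on $t<-\gamma$, and to check that each is $\texttt{TVL}$ times the corresponding integrand.

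For the per-trade computation I would use the same CPMM conventions that underlie Corollary~\ref{longterm}, following \cite{amm1}. Write $x y = k$ for the reserves and $\tilde{P}=y/x$ for the pool price, so that $y=\sqrt{k\tilde{P}}$ and $x=\sqrt{k/\tilde{P}}$. The pool value is expressed through $\texttt{TVL}$, which ties $\sqrt{kP}$ to $\texttt{TVL}$ at the CEX price $P$. Take first $t>\gamma$, so that $\tilde{P}=Pe^{-t}$. The arbitrageur buys the risky asset from the pool until the new pool price is $Pe^{-\gamma}$, so the stablecoin reserve moves from $\sqrt{kP}\,e^{-t/2}$ to $\sqrt{kP}\,e^{-\gamma/2}$. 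The quoted input, including the fee, is this reserve change multiplied by the fee factor $e^{\gamma}$, which gives a quantity proportional to
$$
e^{\gamma}\bigl(e^{-\gamma/2}-e^{-t/2}\bigr)=e^{\gamma/2}-e^{-t/2+\gamma}.
$$
I would then substitute the normalisation of $\sqrt{kP}$ by $\texttt{TVL}$, in exactly the form used for the profit formula, to get the first integrand.

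The case $t<-\gamma$ is symmetric. Here the arbitrageur sells the risky asset until the pool price equals $Pe^{\gamma}$. I would compute the change in the risky reserve, $\sqrt{k/P}\,\bigl(e^{-\gamma/2}-e^{t/2}\bigr)$ up to the side conventions, multiply it by the fee factor, and value it at the CEX price $P$. This should produce the second integrand, and I will verify it against the stated expression term by term.

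Finally I would justify the integration step itself. Exchanging expectation and integral is legitimate by Tonelli, because $V\ge 0$ pointwise when measured in absolute terms. The resulting integrals converge because $V$ grows at most like $e^{|t|/2}$, and I would either assume $f^{*}$ has a finite exponential moment of order $1/2$ or note that the formula is to be read as a possibly infinite expectation.

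I expect the main obstacle to be bookkeeping of conventions rather than the probabilistic part. One issue is which side of the trade counts as volume and whether the fee is applied to the input or the output. Another is how $\texttt{TVL}$ is normalised, that is, whether it is evaluated at the pool price or the CEX price and whether it absorbs a factor of $2$. A third is the sign on the $t<-\gamma$ branch, since the second integrand as written is negative there for large $|t|$. I would fix all of these by matching them to the derivation of Corollary~\ref{longterm}, so that both corollaries share one CPMM parametrisation, and I would take absolute values on the $t<-\gamma$ branch if needed.
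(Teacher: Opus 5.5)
Your probabilistic reduction (arbitrageur present with probability $p$ independently of $\tilde X_n$, per-trade amount averaged against $f^{*}$) and your upper-branch computation are exactly what the paper's proof does. The paper simply quotes the per-trade amount $L\sqrt{W}\bigl(e^{\gamma/2}-e^{-t/2+\gamma}\bigr)$ from \cite{amm3}. The gap is in the bookkeeping you postpone, and it cannot be closed in the direction you intend.

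On $t<-\gamma$ your plan (change of the risky reserve, times the fee factor, valued at $P$) yields $\sqrt{kP}\bigl(e^{\gamma/2}-e^{\gamma+t/2}\bigr)$, or a variant that is always a function of $e^{t/2}$. The stated integrand instead involves $e^{-t/2}$, so it can only come from the numeraire reserve $\sqrt{kP}\,e^{-t/2}\to\sqrt{kP}\,e^{\gamma/2}$. Matching Corollary~\ref{longterm} does fix the convention, but not the one you use. Fees are charged in the numeraire: a factor $e^{\gamma}$ on numeraire paid in and $e^{-\gamma}$ on numeraire paid out. Charging $e^{\gamma}$ on the risky input instead would turn the prefactor $e^{-\gamma/2}$ of the lower-branch profit into $e^{\gamma/2}$. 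Under the correct convention the arbitrageur receives $\sqrt{kP}\bigl(e^{-t/2-\gamma}-e^{-\gamma/2}\bigr)>0$. The stated integrand $e^{-\gamma/2}-e^{-t/2-\gamma}$ is exactly its negative, on the whole half-line $t<-\gamma$ and not only for large $|t|$. So the displayed expression is a signed net numeraire inflow, not a gross volume, and ``taking absolute values'' proves a different formula rather than the stated one.

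The normalisation fails as well. Each per-trade amount is $\sqrt{kP}=L\sqrt{W}$ times the bracket. With $\texttt{TVL}=2L\sqrt{W}$, the convention that makes Corollary~\ref{longterm} correct, this equals $\tfrac12\texttt{TVL}$, not $\texttt{TVL}$. In the profit formula the $2$ is absorbed by $e^{a}-2+e^{-a}=2(\cosh a-1)$; here nothing absorbs it. The paper's own proof has both defects: it writes per-trade volumes with prefactor $L\sqrt{W}$, calls the negative lower-branch expression a ``gross notional volume'', and then states the result with $\texttt{TVL}$. What your argument actually establishes, and what you should state instead of hedging with ``if needed'', is
\begin{equation*}
\frac{\texttt{TVL}}{2}\,p\left[\int_{\gamma}^{\infty}\bigl(e^{\gamma/2}-e^{\gamma-t/2}\bigr)f^{*}(t)\,dt+\int_{-\infty}^{-\gamma}\bigl(e^{-t/2-\gamma}-e^{-\gamma/2}\bigr)f^{*}(t)\,dt\right].
\end{equation*}
Your integrability caveat is a genuine point the paper omits. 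The upper integrand is bounded by $e^{\gamma/2}$, but the lower one grows like $e^{|t|/2}$, and Theorem~\ref{main} gives no moment bounds on $f^{*}$. The lower integral can in fact be infinite, for instance when the noise density $f$ has a polynomial left tail, which Assumption~\ref{fou} allows. So the finiteness of $\int_{-\infty}^{-\gamma}e^{-t/2}f^{*}(t)\,dt$ must be assumed or proved.
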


\begin{remark}{\rm 
The parameter choice $p=1$ corresponds to the situation where arbitrageurs arrive at \emph{every} time instant. 
This is the discrete-time version of the ``fast-block regime'' treated in \cite{amm3}, our results above apply directly to this situation too.}
\end{remark}

\section{Numerical Results}
\label{sec:numericalresults}

To ascertain the validity of our mathematical model, we developed a numerical integrator in C++ to calculate the SPD of the misprice process for different model parameters\footnote{Source code: \url{https://gitlab.com/jtapolcai/sde_amm}}.
Downstream, the logarithmic misprice process, taken from~\cite{amm3}, is understood as:
\begin{equation}
     z_t \triangleq \log P_t / \tilde{P}_t, 
     \label{eq:logmispdef}
\end{equation}
where $\tilde{P}_t$ is the DEX price and $P_t$ is the external market price. Our C++ driver calculates the invariant probability distribution of this process.

Appendix \ref{sec:convergence} analyses the convergence properties of the stationary distribution. 
Our results indicate that after approximately 100 iterations, the function iteration yields curves that are nearly indistinguishable from the limiting distribution. 
Thus in the rest of the section the function iteration was run for 100 steps on a grid of 201 points. 

In Appendix \ref{sec:refrence_values}, we demonstrate that our model can reproduce the numerical values reported in Table 1. of \cite{amm3} by computing the SPD to arbitrary order and accuracy. 

Appendix \ref{sec:statisticaltests} analyses hourly log-returns of major cryptocurrencies (Table~\ref{table:hourlylogreturns}). 
The results show heavy tails, skewness, and uniformly low Kolmogorov–Smirnov p-values, confirming that Gaussian-based models such as Black–Scholes cannot capture the empirical dynamics. 
This motivates the need for richer model families, including jump-augmented processes.

\subsection{Fitting a model on empirical data}
\label{sec_fitting}

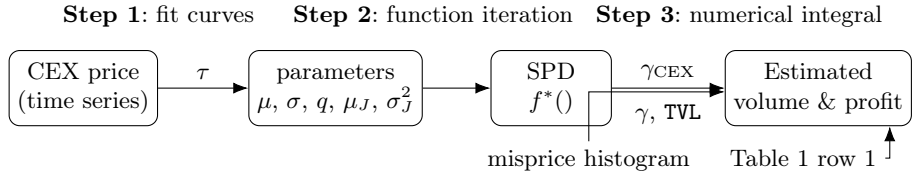
\begin{figure}
\begin{tikzpicture}[
  >={Latex[length=2mm]},
  node distance=2mm and 15mm,
  box/.style={draw, rounded corners, minimum width=16mm, minimum height=10mm, align=center},
  circ/.style={draw, circle, minimum size=8mm, inner sep=0pt, align=center},
  txt/.style={align=center},
  every label/.style={font=\footnotesize, inner sep=1pt},
 scale=0.5
]
\node[box] (mp) {CEX price\\(time series)};
\node[box, right=of mp, xshift=-3mm] (params) {parameters \\ $\mu$, $\sigma$, $q$, $\mu_J$, $\sigma_J^{2}$};
\node[box, right=of params, xshift=-6mm] (marg) {SPD \\ $f^*()$};
\node[box, right=of marg] (res) {Estimated \\ volume \& profit};
\draw[->] (mp) -- node[above,sloped,pos=.5]{$\tau$} (params); 
\node[above=of mp, xshift=10mm] {\textbf{Step 1}: fit curves};
\draw[->] (params) -- (marg);
\node[above=of params, xshift=13mm, align=center] {\textbf{Step 2}: function iteration};
\draw[->] (marg) -- node[above,sloped,pos=.5](l3){$\gammacex$} (res);
\node[below=of l3, yshift=1mm, align=center] {$\gamma$, \texttt{TVL}};
\node[above=of marg, xshift=25mm, align=center] {\textbf{Step 3}: numerical integral};
\node[below=of marg, align=center,  xshift=5mm] (hist) {misprice histogram};
\draw[->] (hist) |- 
([yshift=-1mm]res.west);
\node[below=of res, xshift=-2mm] (table1) {Table \ref{table:comp} row 1};
\draw[->] (table1) -| ([xshift=19mm]res.south);
\end{tikzpicture}
\caption{Flowchart of stochastic model evaluation on real data from an AMM.}
\label{fig:flowchart_measurement}
\end{figure}

To shed light on the applicability and the accuracy of our mathematical framework, we calculate the SPD of Eq.~\ref{eq:logmispdef} with and without jumps, see Fig \ref{fig:flowchart_measurement}.
We extract blockchain data using our own execution client, collecting transactions from the Uniswap V2 ETH--USDT AMM over the period 13--29 September 2025.

The values of $\tilde{P}_t$ are computed as follows. In parallel, we collect exchange-rate data at one-second resolution from multiple CEXs\footnote{Best ask price originate from Binance (84\%), KuCoin (9\%), and Gate.io (7\%), while best bids originate from Binance (88\%), KuCoin (5\%), and Gate.io (7\%).}. We assume competition among arbitrageurs: the arbitrageur who can source the asset at the best external price ultimately ``wins,'' as they can afford to pay the highest transaction fee and therefore have their transaction included on-chain. Accordingly, we compute the mispricing by comparing the best bid (or best ask) observed on the AMM to the best prices available across all monitored CEXs. We define a 12-second time window and assume that if, at any moment within this window, any CEX offers a better price, then some arbitrageur captures the opportunity and submits a transaction. We estimate that blocks arrive at our node with an approximately 8-second delay; the start of the 12-second window is aligned accordingly. The intuition behind this choice is that blocks are typically proposed within the first four seconds of a slot, while measuring this delay precisely would be extremely difficult. Finally, $\tilde{P}_t$ is inferred from the swap events’ amount-in and amount-out values.

Our stochastic model computes the expected number of arbitrages, their total volume, and the expected profit using seven parameters (inputs): TVL (\texttt{TVL}$=2L\sqrt{W}$), 
exchange fee ($\gamma$), daily price volatility ($\sigma$) and drift ($\mu$), and jump parameters — per-step jump probability ($q$), and jump-size distribution $U \sim \mathcal{N}(\mu_J,\sigma_J^{2})$ with mean $\mu_J$ and variance $\sigma_J^{2}$.

In Step 1, we estimate $\mu$, $\sigma$, $q$, $\mu_J$, and $\sigma_J^{2}$ from spot price data.
Recall that in our stochastic model, price movements arise from either a geometric Brownian motion (GBM) or discrete price jumps. In practice, these two sources cannot be perfectly distinguished. Therefore, a fixed jump-threshold parameter $\tau$ is commonly used: if the absolute return is smaller than $\tau$ times the price volatility, the movement is classified as GBM; otherwise, it is classified as a jump.
If $\tau$ is chosen too small, the jump fitting becomes inaccurate because many observations that actually belong to the GBM process are incorrectly classified as jumps. Conversely, if $\tau$ is chosen too large, too few jump observations remain for reliable parameter estimation. The measurements in the observed time period were not particularly sensitive to the choice of $\tau$; however, as a rule of thumb, we suggest setting $\tau = 2.0$. Under a Gaussian assumption, this threshold implies that approximately 4.55\% of GBM observations (two-sided) are incorrectly included in the jump sample, which represents a reasonable trade-off between bias and variance in the jump-parameter estimation. 

We then fit Gaussian curves to the empirical mispricing histograms: one to the GBM component, yielding estimates of $\mu$ and $\sigma$, and another to the jump component, providing $\mu_J$ and $\sigma_J$ for the jump-size distribution $U$ (these estimates were rescaled to match the time scale of our model).
The jump probability $q$ is inferred using the threshold $\tau$, which was set to attain higher log-likelihood against the observed time series. 

\begin{figure}[h]

 \fast{\includegraphics[width=\textwidth]{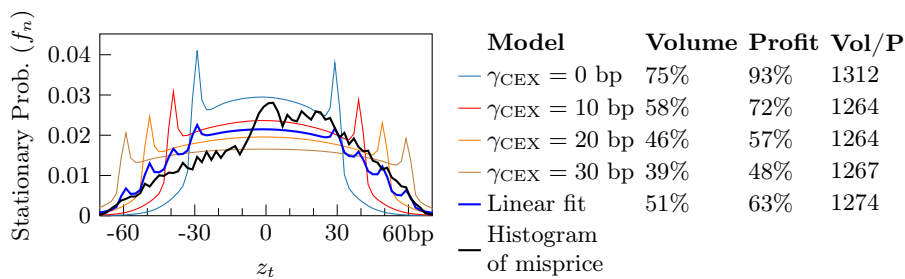}}
        {\begin{tikzpicture}
\definecolor{steelblue31119180}{RGB}{31,119,180}
\pgfplotsset{legend image post style={xscale=0.5}}

\begin{axis}[
  font=\footnotesize,
  height=40mm,
  legend columns=4,                
  legend cell align={left},
  legend style={
    fill opacity=1, row sep=2pt, 
    draw=none,
    at={(1.05,1.05)},
    anchor=north west,
    text opacity=1,
    align=left,
    font=\footnotesize,
    /tikz/nodes={inner sep=1pt},
    /tikz/column sep=1pt},
  ylabel style={yshift=-1.5ex},
  minor xtick={},
  minor ytick={},
  tick pos=left,
  width=60mm,
  x grid style={darkgrey176},
  xlabel={$z_t$},
  xmin=-0.007, xmax=0.007,
  xtick style={color=black},
  xtick={-0.006,-0.003,0,0.003,0.006},
  xticklabels={-60, -30, 0, 30, 60bp},
  scaled x ticks=false,
  minor x tick num=2,
  y grid style={darkgrey176},
  ylabel={Stationary Prob. ($f_n$)},
  ymin=0,
  ytick style={color=black},  
  ytick={0,0.01,0.02,0.03,0.04},
  yticklabels={0,0.01,0.02,0.03,0.04},
    scaled y ticks=false,
]

\addlegendimage{empty legend}\addlegendentry{\textbf{Model}}
\addlegendimage{empty legend}\addlegendentry{\textbf{Volume}}
\addlegendimage{empty legend}\addlegendentry{\textbf{Profit}}
\addlegendimage{empty legend}\addlegendentry{\textbf{Vol/P}}

\addplot[steelblue31119180,mark=none]
  table [x=x, y expr={\thisrow{density_0}/5000}, col sep=comma]
  {figures/mixture_fit_out.csv};
\addlegendentry{$\gamma_{\text{CEX}}=0$ bp}     
\addlegendimage{empty legend}\addlegendentry{75\%}  
\addlegendimage{empty legend}\addlegendentry{93\%}  
\addlegendimage{empty legend}\addlegendentry{1312}

\addplot[red,mark=none]
  table [x=x, y expr={\thisrow{density_1}/5000}, col sep=comma]
  {figures/mixture_fit_out.csv};
\addlegendentry{$\gamma_{\text{CEX}}=10$ bp}
\addlegendimage{empty legend}\addlegendentry{58\%}
\addlegendimage{empty legend}\addlegendentry{72\%}
\addlegendimage{empty legend}\addlegendentry{1264}

\addplot[orange,mark=none]
  table [x=x, y expr={\thisrow{density_2}/5000}, col sep=comma]
  {figures/mixture_fit_out.csv};
\addlegendentry{$\gamma_{\text{CEX}}=20$ bp}
\addlegendimage{empty legend}\addlegendentry{46\%}
\addlegendimage{empty legend}\addlegendentry{57\%}
\addlegendimage{empty legend}\addlegendentry{1264}
\addplot[brown,mark=none]
  table [x=x, y expr={\thisrow{density_3}/5000}, col sep=comma]
  {figures/mixture_fit_out.csv};
\addlegendentry{$\gamma_{\text{CEX}}=30$ bp}
\addlegendimage{empty legend}\addlegendentry{39\%}
\addlegendimage{empty legend}\addlegendentry{48\%}
\addlegendimage{empty legend}\addlegendentry{1267}
\addplot[thick,blue,mark=none]
  table [x=x, y expr={\thisrow{mixture_density}/5000}, col sep=comma]
  {figures/mixture_fit_out.csv};
\addlegendentry{Linear fit}
\addlegendimage{empty legend}\addlegendentry{51\%}
\addlegendimage{empty legend}\addlegendentry{63\%}
\addlegendimage{empty legend}\addlegendentry{1274}
\addplot[thick,black,mark=none]
  table [x=x, y expr={\thisrow{empirical_counts}/100000}, col sep=comma]
  {figures/mixture_fit_out.csv};
\addlegendentry{Histogram \\of misprice}
\addlegendimage{empty legend}\addlegendentry{ }
\addlegendimage{empty legend}\addlegendentry{ }

\end{axis}
\end{tikzpicture}}
\caption{Comparison of the SPD for five different values of $\gammacex$ with the empirical distribution. A linear combination fit was also introduced to approximate the empirical mispricing histogram, yielding weights $w_{0}=19\%$, $w_{10}=19\%$, $w_{20}=28\%$, and $w_{30}=34\%$. A histogram of the empirical mispricing is also shown. The legend further reports the share of the computed arbitrage volume relative to the empirically measured total pool volume, as well as the share of profit relative to the MEV-based estimate (see row 1 of Table~\ref{table:comp}).
The last column of the legend reports the volume-to-profit ratio, which is only marginally affected by $\gammacex$.
    \label{fig:misprice_densities}}
\end{figure}

With $\tau = 2.0$, the estimated daily volatility is $1.9$\%, the drift is $0.023$, and the jump probability is $q = 0.045$ per one-second step, with mean $-24.9$\% and standard deviation $15.7$\% in log-return units.
For comparison, when fitting the same parameters under a pure diffusion model (i.e., setting $q=0$), we obtained a daily volatility of $2.1$\% and a drift of $0.046$.

Next, in Step 2, we evaluate the function iteration taking a few minutes with 200 steps. 
Finally, in Step 3, we compute trade volumes using the formula in \Cref{longterm_volume}, profits using \Cref{longterm}, and transaction counts corresponding to the lower- and upper-tail integrals of the SPD.
To this end, we needed pool parameters (TVL and $\gamma$ are directly observable). 
In practice, however, when solving the SDE we opt to use effective $\gamma$ that also includes the CEX leg. This is denoted by $\gammacex$, i.e., $\gamma_{-} = \gamma_{+} = \gamma + \gammacex$. 
Note that this essentially corresponds to the CEX bid–ask spread, which is not precisely known, and, as Fig. \ref{fig:misprice_densities} shows, our calculations are highly sensitive to this value.

Fig. \ref{fig:misprice_densities} compares the observed misprice histogram with the stationary distribution $f$ obtained from the SDE under different values of the $\gammacex$ parameter.
In our model, $\gammacex$ is constant, which results in two sharp discontinuities of the distribution at $\gamma_{-}$ and $\gamma_{+}$.
In practice, however, the boundary is blurred.
This can be explained either by temporal variation in $\gammacex$, which shifts the effective cut-off points, or by differences across CEX exchange rates, whose superposition results in smoother curves. 
As a further extension, the mid-price SDE can be made more sophisticated with a spread process, likely mean-reverting (e.g., spreads widening in volatile periods and narrowing in calm markets).

To approximate this effect, we use a linear combination over spreads $\gammacex\in{0,10,20,30}$bp. 
Let $f(x\mid \gammacex=g)$ denote the stationary density at spread $g$. 
We choose nonnegative weights $w_g$ (summing to one) to obtain the best fit of the empirical histogram via the linear combination $f_{\text{comb}}(x)=\sum_{g} w_g\,f\!\left(x\mid \gammacex=g\right)$. 
The resulting weights $w_g$ are shown in the caption of Fig. \ref{fig:misprice_densities}. 
This modelling approach is consistent with the assumption that the CEX spread is random and can take multiple values. This interpretation aligns well with our view of arbitrageurs’ activity in the market: at any given moment, different participants have access to different CEX venues, each characterised by its own bid–ask spread level.


\subsection{Cross verification of the results}
\label{sec:cross-ver}

\begin{figure}
 \fast{\includegraphics[width=12cm]{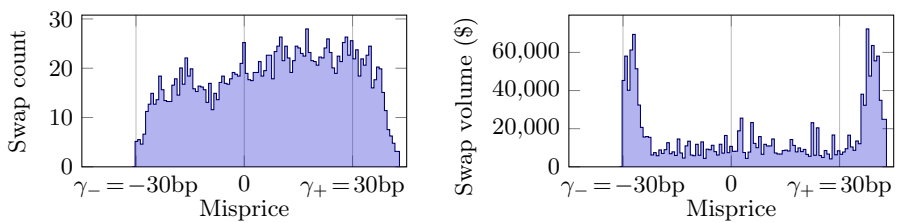}
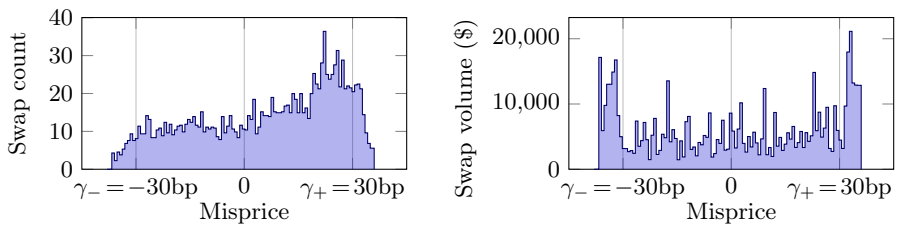}{
\tikzset{
  histplot/.style={
    const plot,
    draw=blue!40!black,
    fill=blue!80!black,
    fill opacity=0.30,
    mark=none
  }
}
\pgfplotsset{
  myaxis/.style={
    xlabel={Misprice},
    xmin=-0.45, xmax=0.45, ymin=0,
    xtick={-0.9,-0.6,-0.3,0,0.3,0.6,0.9},
    xticklabels={$-90$, $-60$, $\gamma_{-}\!=\!-30$bp, $0$, $\gamma_{+}\!=\!30$bp, $60$, $90$bp},
    xlabel style={yshift=2mm},xmajorgrids=true,
    scaled ticks=false,
    width=59mm, height=36mm
  }
}
\subfloat[
Daily swap count and volume histograms for Uniswap V2 ETH-USDT pool
\label{fig:volume_hist1}]{
\begin{tikzpicture}
\begin{axis}[myaxis, ylabel={Swap count},  ylabel style={yshift=-4mm}]
\addplot[histplot]
  table [x=binmidpointpercent, y=countperday, col sep=comma]
  {figures/misprice_histogram_data_0x0d4a.csv};
\end{axis}  
\end{tikzpicture}
\quad
\begin{tikzpicture}
\begin{axis}[myaxis, ylabel={Swap volume (\$)},  ylabel style={yshift=4pt,xshift=-2mm}]
\addplot[histplot]
  table [x=binmidpointpercent, y=volumeperdayusdc, col sep=comma]
  {figures/misprice_histogram_data_0x0d4a.csv};
\end{axis}  
\end{tikzpicture}
} \\
\subfloat[
Daily swap count and volume histograms for Uniswap V2 ETH-USDC pool
\label{fig:volume_hist2}]{
\begin{tikzpicture}
\begin{axis}[myaxis, ylabel={Swap count},  ylabel style={yshift=-4mm}]
\addplot[histplot]
  table [x=bin_midpoint_percent, y=count_per_day, col sep=comma]
  {figures/misprice_histogram_data_0xB4e16.csv};
\end{axis}  
\end{tikzpicture}
\quad
\begin{tikzpicture}
\begin{axis}[myaxis, ylabel={Swap volume (\$)},  ylabel style={yshift=4pt,xshift=-2mm}]
\addplot[histplot]
  table [x=bin_midpoint_percent, y=volume_per_day_usdc, col sep=comma]
  {figures/misprice_histogram_data_0xB4e16.csv};
\end{axis}  
\end{tikzpicture}
} }
\caption{Cross-validation measurements on two ETH–stablecoin Uniswap V2 pools confirming the small volume share of noise traders during September 13--29, 2025. 
For each swap, we computed the misprice before execution (CEX–DEX difference). 
The chart also shows the histogram, weighted by the trade volume. 
}
\label{fig:experiment_hist}
\end{figure}

The measurement in the previous section was fairly complex and relied on several assumptions; therefore, we conducted a second, much simpler measurement to verify the validity of our results. 

The first question we examined was whether the volume produced by noise traders is indeed smaller compared to the arbitrage volume. 
Fig.~\ref{fig:experiment_hist} presents the daily aggregate number of swaps (left) and their aggregate volume in USDT (right) as functions of the mispricing in Uniswap V2 pools (with $\gamma = 30$ bp).
We observe that trade counts are distributed uniformly across the $[-\gamma, \gamma]$ mispricing interval, but volumes are significantly higher near the boundaries, where arbitrage opportunities arise. 
The figures suggest that arbitrage accounts for approximately half of the volume and that $\gamma_{\mathrm{CEX}}$ is typically very small. Interestingly, the figure shows that arbitrage swaps can also occur for slightly negative $\gamma_{\mathrm{CEX}}$. This may be related to the fact that arbitrageurs may anticipate imminent price movements (as the examined time interval exhibits a mild price drift), or that they may receive information about price jumps slightly earlier, for example due to minor latencies in price data obtained via free CEX APIs.

Finally, we note that the arbitrage profit estimation in our model exhibits only limited sensitivity to $\gamma_{\mathrm{CEX}}$, suggesting that arbitrage volume can serve as a reasonably accurate proxy for arbitrage profits and the associated MEV.
 The last column of the legend on Fig. \ref{fig:misprice_densities} shows that the ratio of volume to profit computed from the SDE is not particularly sensitive to $\gammacex$. 
Intuitively, higher arbitrage profits necessarily require higher total volume. 
The resulting upper and lower bounds lie fairly close to each other, indicating that the MEV paid out from swaps is only slightly lower than the profit generated when the AMM is used exclusively by arbitrageurs in the examined AMM. 
Based on the analysis of price fluctuations in the ETH–USD currency pair, we conjecture that the arbitrage market is saturated: block builders can choose among many competing arbitrageurs, which forces them to rebate a large fraction of their profits.

\section{Conclusion}
\label{sec:conclusions}

In this work we developed an extended stochastic framework for modeling the price differences between CEXs and DEXs, treating the mispricing process as a Markov chain with state-dependent volatility, drift, and jumps. We proved that under mild assumptions the process admits a stationary distribution, and we provided constructive formulas—via function iteration—for calculating this invariant density. Building on this result, we derived closed-form expressions for arbitrage volumes and profits in a CPMM setting. Our estimates indicate that arbitrage profits are orders of magnitude larger than those predicted by earlier GBM-based models. 

Beyond theoretical analysis, we carried out extensive empirical work: we extracted DEX data from Ethereum, combined it with CEX price feeds, and subjected the resulting mispricing series to statistical tests. 
An open-source C++ implementation accompanies this research to aid reproducibility and further exploration. 

We evaluated the estimates of arbitrage volume derived from our model on the Uniswap V2 USDT–ETH pool and  found them to be of comparable magnitude, though slightly larger, than the measured CEX–DEX arbitrage volume.
This strengthens our confidence that the model’s assumptions are reasonably close to reality and that it can provide meaningful insights into several important practical questions, such as:
\begin{itemize}
  \item The revenue sustaining PBS arises naturally from the inherent dynamics of financial markets through CEX--DEX arbitrages in AMMs. 
  \item A large amount of AMM volume may be attributable to CEX--DEX arbitrage transactions. 
  \item CEX–DEX arbitrage profits are comparable to the measured MEV, suggesting that the arbitrage market may be saturated.        \item The heuristics employed in empirical studies to identify CEX--DEX arbitrage transactions may be overly restrictive. 
  \item Shorter block times are unlikely to reduce MEV, as arbitrage opportunities primarily emerge from rapid price jumps. 
\end{itemize}

\section*{Acknowledgements}
The project was partly supported by Project no. K23 146347 of National Research, Development and Innovation Fund of Hungary. 

\bibliographystyle{splncs04}
\bibliography{reference}

\appendix{}

\section{Proofs}

\begin{proof}[Proof of Lemma \ref{density1}] If $Z_{n+1}=0$ then the density of $\mu+\sigma\varepsilon_{n+1}$ is
$\frac{1}{\sigma}f\left(\frac{y-\mu}{\sigma}\right)$ as easily seen. If $Z_{n+1}=1$
then we need to calculate the convolution of $u(z)$ with $\frac{1}{\sigma}f\left(\frac{z-\mu}{\sigma}\right)$,{}
which equals
$$
\int_{\mathbb{R}}u(y-z)\frac{1}{\sigma}f\left(\frac{z-\mu}{\sigma}\right)
$$
for $y\in\mathbb{R}$. The statement now follows trivially.	
\end{proof}

\begin{proof}[Proof of Lemma \ref{density2}]
This follows directly from Lemma \ref{density1}
and from \eqref{tildex}, \eqref{tildey}.
\end{proof}

\begin{proof}[Proof of Corollary \ref{cor}]
Denoting by $\mu_{0}$ the law of $X_{0}$, we have
$$
P(X_{1}\in A)=\int_{A}\int_{\mathbb{R}}q(x_{0},x)\mu_{0}(dx_{0})dx,
$$	
so $X_{1}$ has density $f_{1}(x)=\int_{\mathbb{R}}q(x_{0},x)\mu_{0}(dx_{0})$. Repeating
the same argument, we can calculate $f_{k}(x)$, $x\in\mathbb{R}$, for all $k\geq 2$.
\end{proof}

We now state a general law of large numbers
for functionals of Markov processes, in a form that is convenient for us for the proof
of Theorem \ref{main}. For simplicity, we assume that the state
space is $\mathbb{R}$ but the result is true for
general state spaces. Let the Markov chain $Y_{n}$ have transition kernel $R(y,\cdot)$, that is, let
$$
P(Y_{n+1}\in A|Y_n=y)=R(y,A)
$$
hold for all Borel sets $A$ and for all 
$y\in\mathbb{R}$. For an arbitrary probability
$\nu$, the notation $R\nu$ refers to the
probability $A\to \int_\mathbb{R} R(y,A)\nu(dy)$.
Clearly, $R^n \nu$ (the $n$th iterate of the above
mapping) equals the law of $Y_n$ when the law of
$Y_0$ is $\nu$).

We recall the total variation distance of
two probabilities $\nu_1,\nu_2$:
$$
||\nu_1-\nu_2||_{TV}:=\sup_A|\nu_1(A)-\nu_2(A)|,
$$
where $A$ ranges over all the Borel sets in 
$\mathbb{R}$.

\begin{proposition}  Let the Markov chain $Y_n$ have a unique invariant measure $\nu_{*}$ 
and assume that, for all 
probabilities $\nu$ on $\mathbb{R}$, 
\begin{equation}\label{novo}
||R^{n}\nu-\nu_{*}||_{TV}\to 0,\ n\to\infty
\end{equation}
holds (that is, the law of $Y_{n}$ tends to $\nu_{*}$ whatever the initial distribution $\nu$ of $Y_{0}$ is). Then for all measurable
$\phi:\mathbb{R}\to\mathbb{R}$ such that $\int_{\mathbb{R}}|\phi(x)|\nu_{*}(dx)<\infty$, one has almost surely
\begin{equation}\label{lln}
\frac{\phi(Y_{1})+\ldots+\phi(Y_{n})}{n}\to\int_{\mathbb{R}}\phi(x)\nu_{*}(dx),\ n\to\infty.
\end{equation}
\end{proposition}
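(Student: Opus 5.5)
The plan is to reduce the statement to the classical strong law of large numbers for stationary ergodic sequences, via a coupling with the stationary chain. First, I will run the chain from the invariant law: let $Y^*_n$ be a Markov chain with kernel $R$ and $Y^*_0\sim\nu_*$. Then $(Y^*_n)_{n\ge 0}$ is a stationary sequence. Uniqueness of $\nu_*$ implies that the shift on path space is ergodic. The standard argument is that if $B$ were a shift-invariant event with $0<P(B)<1$, then conditioning $\nu_*$ on the set $\{y: P_y(B)=1\}$ would produce a second invariant probability. Birkhoff's ergodic theorem then gives \eqref{lln} almost surely for $Y^*$ whenever $\phi\in L^1(\nu_*)$.

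Second, I transfer the result to an arbitrary initial law $\nu$. Let $G$ be the set of paths $(y_1,y_2,\dots)$ along which the Cesàro averages converge to $\int\phi\,d\nu_*$; this is a measurable, shift-invariant set. Put $g(y):=P_y(G)$, the probability of $G$ when the chain starts at $y$. By the Markov property and shift invariance, $g(Y_n)=P(G\mid\mathcal{F}_n)$, so it is a bounded martingale that converges a.s. to $1_G$. The first step gives $\int g\,d\nu_*=1$. Moreover $P_\nu(G)=\int g\,d(R^n\nu)$ for every $n$, since $G$ is shift invariant. By \eqref{novo},
\[
\Bigl|\int g\,d(R^n\nu)-\int g\,d\nu_*\Bigr|\le 2\|R^n\nu-\nu_*\|_{TV}\to 0,
\]
and therefore $P_\nu(G)=1$. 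This proves \eqref{lln} for every initial distribution.

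The main obstacle is the first step: deducing ergodicity of the stationary path measure from uniqueness of $\nu_*$. One must check carefully that the conditioned measure is invariant. The set $\{g=1\}$ is absorbing $\nu_*$-a.e. because $g$ is harmonic and takes only the values $0$ and $1$ $\nu_*$-a.e. Alternatively, this step can be bypassed. \eqref{novo} applied to $\nu=\delta_y$ gives $R^n(y,\cdot)\to\nu_*$ in total variation for every $y$, hence the tail $\sigma$-field is trivial under $P_{\nu_*}$, and so is the invariant $\sigma$-field. I would use this tail-triviality route if the conditioning argument gets messy. Finally, measurability of $G$ and $g$ is routine, since $\phi$ is Borel.
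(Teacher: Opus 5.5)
Your proof is correct. Its skeleton matches the paper's: an LLN for the stationary chain, then a transfer to arbitrary initial laws using \eqref{novo}. The execution is different, though.

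The paper argues purely by citation. From \eqref{novo} it gets $\|R^n\nu_1-R^n\nu_2\|_{TV}\to 0$. Hence bounded harmonic functions are constant, hence the tail $\sigma$-field is trivial. Uniqueness of $\nu_*$ plus the ergodic theorems of the cited monograph then give \eqref{lln}. Your fallback in the last paragraph is exactly this route.

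Your main route differs in two places.
\begin{itemize}
\item You prove ergodicity of the stationary path measure by hand from uniqueness, via the conditioning/extremality argument. Your setup is sound. The function $h(y)=P_y(B)$ is harmonic. It is $\{0,1\}$-valued $\nu_*$-a.e., because $h(Y_n)\to 1_B$ a.s.\ while $h(Y_n)$ has the fixed law of $h$ under $\nu_*$. Finally, $\{h=1\}$ is $\nu_*$-a.e.\ absorbing.
\item For the transfer, you replace the harmonic-function/tail-triviality machinery by the direct identity $P_\nu(G)=\int g\,d(R^n\nu)$.
\end{itemize}

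This transfer is the paper's mechanism in disguise. Since $g$ is bounded and harmonic, $\int g\,d(R^n\nu)=\int g\,d\nu$. Taking $\nu=\delta_y$ in your estimate then shows $g\equiv 1$. That is just ``bounded harmonic functions are constant'' applied to the one harmonic function that matters.

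Your version buys self-containment and a clean separation of roles: uniqueness is used for the stationary LLN, and \eqref{novo} only for the transfer. The paper's version buys brevity.

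Two small remarks.
\begin{itemize}
\item The martingale convergence $g(Y_n)\to 1_G$ plays no role in your transfer step.
\item \eqref{novo} already forces uniqueness of $\nu_*$, since any invariant $\nu'$ satisfies $R^n\nu'=\nu'$. The same trick applied to $h(y)=P_y(B)$ for an invariant event $B$ gives $h\equiv P_{\nu_*}(B)$ via $h=R^nh\to\int h\,d\nu_*$. Then $h(Y_n)\to 1_B$ forces $P_{\nu_*}(B)\in\{0,1\}$, so the conditioning step could be dropped altogether.
\end{itemize}
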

\begin{proof}
We rely on \cite{eric} and all the cited results are from that book. Note first that
\eqref{novo} implies that, for arbitrary probabilities
$\nu_1,\nu_2$, one has 
\begin{equation}\label{pori}
||R^n\nu_1-R^n\nu_2||_{TV}\to 0
\end{equation}
as $n\to\infty$.

By Proposition 11.A.3, \eqref{pori} implies that bounded harmonic functions are constant, so Corollary 5.2.4 ensures that the
tail sigma-field is trivial. As the invariant probability is unique, Theorems 5.2.6 and 5.2.1 
imply that the law of large numbers \eqref{lln} holds in the stated form. 
\end{proof}

The next technical lemma is also
necessary for the proof of Theorem \ref{main}.

\begin{lemma}\label{lowerestimate}
$$
\delta:=\inf_{(x_{0},x_{1})\in \mathbb{R}\times [-\gamma_{-},\gamma_{+}]}q(x_{0},x_{1})>0.
$$	
\end{lemma}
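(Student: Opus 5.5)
The plan is to bound $q(x_0,x_1)$ from below by keeping only one nonnegative piece of it, namely the one in which an arbitrageur arrives and no jump occurs. That piece can be controlled uniformly in $x_0$ because the arbitrage step moves the state into the bounded window $[-\gamma_{-},\gamma_{+}]$.

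First, under Assumption \ref{bou} the function $h(x,y)$ of Lemma \ref{density1} does not depend on $x$. Both of its summands are nonnegative, since $f$ and $u$ are densities. I will therefore discard the jump convolution term and use
$$
h(x,y)\geq (1-q)\frac{1}{\sigma}f\left(\frac{y-\mu}{\sigma}\right).
$$
This is useful because $q<1$ by assumption.

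Second, in the formula of Lemma \ref{density2} I will drop the term $(1-p)h(x_0,x_1-x_0)\geq 0$. This is the essential step, because that term involves the unbounded displacement $x_1-x_0$ and cannot be bounded below uniformly in $x_0\in\mathbb{R}$. What remains is
$$
q(x_0,x_1)\geq p\,(1-q)\frac{1}{\sigma}f\left(\frac{x_1-\tilde{x}_0-\mu}{\sigma}\right).
$$
By its definition, $\tilde{x}_0\in[-\gamma_{-},\gamma_{+}]$ for every $x_0$. Since also $x_1\in[-\gamma_{-},\gamma_{+}]$, we get $|x_1-\tilde{x}_0|\leq \gamma_{+}+\gamma_{-}\leq 2\max\{\gamma_{+},\gamma_{-}\}$. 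Hence the argument of $f$ satisfies
$$
\left|\frac{x_1-\tilde{x}_0-\mu}{\sigma}\right|\leq \frac{2\max\{\gamma_{+},\gamma_{-}\}+|\mu|}{\sigma}.
$$

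Third, I need this bound to be at most the constant $H$ of Assumption \ref{fou}, so that $f$ at that point is at least $c_1$. I read $c_0$ in Assumption \ref{fou} as a lower bound for the volatility; in the constant case of Assumption \ref{bou} one may take $c_0=\sigma$ (or any $c_0\leq\sigma$). Then
$$
\frac{2\max\{\gamma_{+},\gamma_{-}\}+|\mu|}{\sigma}\leq \frac{2\max\{\gamma_{+},\gamma_{-}\}+K}{c_0}=H.
$$
Combining the three steps gives
$$
\delta\geq \frac{p(1-q)c_1}{\sigma}>0,
$$
using $p>0$, $q<1$ and $c_1>0$.

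The only point needing care is matching this to the constant $H$: the paper does not define $c_0$ explicitly, so its relation to $\sigma$ must be fixed as above. Otherwise the argument is a direct estimate. I would also remark that the uniformity in $x_0$, which is what makes the whole state space a small set in Theorem \ref{main}, comes entirely from the arbitrage reset to $\tilde{x}_0$.
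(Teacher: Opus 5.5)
Your proof is correct and is essentially the paper's argument: keep only the arbitrage-and-no-jump term $p(1-q)\frac{1}{\sigma}f\bigl(\frac{x_1-\tilde{x}_0-\mu}{\sigma}\bigr)$, use $\tilde{x}_0\in[-\gamma_{-},\gamma_{+}]$ to keep the argument of $f$ inside $[-H,H]$, and bound $f$ there by $c_1$. The paper states the constant as $p(1-q)c_1/K$ (via $1/\sigma\geq 1/K$) rather than your slightly sharper $p(1-q)c_1/\sigma$. It also uses tacitly the relation $c_0\leq\sigma$ and the reset to $\tilde{x}_0$, both of which you make explicit.
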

\begin{proof}
We have that
\begin{eqnarray*}
q(x,y)&\geq& p (1-q)\inf_{x\in\mathbb{R},y\in [-\gamma_-,\gamma_+]}\frac{1}{\sigma} 
f\left(\frac{y-\mu-x}
{\sigma}\right)\\
&\geq& \frac{p(1-q)}{K}
\inf_{x,y\in [-\gamma_-,\gamma_+]}f\left(\frac{y-\mu-x}{\sigma}\right)\geq  \frac{p(1-q)}{K}c_1
\end{eqnarray*}
\end{proof}

\begin{proof}[Proof of Theorem \ref{main}]
We rely on Chapter 16 of \cite{meyn-tweedie}. We define $\nu_{1}(A):=\mathrm{Leb}(A\cap [-\gamma_{-},\gamma_{+}])/(\gamma_{+}-\gamma_{-})$
and first verify that the whole state space of the process $\tilde{X}$ (that is, $\mathbb{R}$) 
is a $\nu_{1}$-small set (see Section 5.2 of \cite{meyn-tweedie} for the definition). 
Indeed, we have $q(x_{0},x_{1})\geq \delta>0$ for all $x_{0}\in\mathbb{R}$ and $x_{1}\in [-\gamma_{-},\gamma_{+}]$,
hence
$$
\inf_{x}Q(x,A)\geq \delta\nu_{1}(A)(\gamma_{+}-\gamma_{-}).
$$

Theorem 16.0.2 of \cite{meyn-tweedie} now implies that the Markov chain $\tilde{X}$ is \emph{uniformly ergodic}:
there is an invariant measure $\mu_{*}$ such that
\begin{equation}\label{tovo}
||Q^{n}(x,A)-\mu_{*}||_{TV}\leq C'\rho^{n}
\end{equation}
for suitable $C'>0$ and $\rho<1$. By invariance and Fubini's theorem,
$$
\mu_{*}(A)=\int_{\mathbb{R}}\int_{A}q(x_{0},x_{1})dx_{0}\mu_{*}(dx_{1})=\int_{A}\int_{\mathbb{R}}q(x_{0},x_{1})\mu_{*}(dx_{1})dx_{0},
$$ 
showing that $\mu_{*}$ is absolutely continuous w.r.t.\ the Lebesgue measure and hence admits a density $f_{*}$.
It follows from \eqref{tovo} that
\begin{equation}
\int_{\mathbb{R}}|f^{*}(t)-f_{n}(t)|\, dt\leq 2C'\rho^{n}.
\end{equation}
\end{proof}

\begin{proof}[Proof of Corollary \ref{proba}]
Apply \eqref{lln1} with the choice
$\phi(x):=1_{\{x\notin [-\gamma_-,\gamma_+\}}$.
\end{proof}

\begin{proof}[Proof of Corollary \ref{longterm}]
It is explained and calculated in Lemma 2 and Corollary 2 of \cite{amm3} (see also Example 2 of \cite{amm1}) that, if the current price is
$W$ and the mispricing is $z$ then the arbitrageur (if present) makes an amount	
$$
L\sqrt{W} e^{\gamma/2}[e^{(z-\gamma)/2} -2 + e^{-(z-\gamma)/2}]1_{\{z>\gamma\}}
$$
when mispricing is too high ($z>\gamma$) and
$$
L\sqrt{W}e^{-\gamma/2}[e^{(z+\gamma)/2} -2 + e^{-(z+\gamma)/2}]1_{\{z<-\gamma\}}
$$
when mispricing is too low ($z<-\gamma$). In a stationary state the distribution of $z$ is $\mu_{*}$.
The presence of an arbitrageur occurs with probability $p$, independently of what happened before.
Consider the second term without $L\sqrt{W}$ 
\[
p\int_{-\infty}^{-\gamma} e^{-\gamma/2}
\Bigl[e^{(t+\gamma)/2}-2+e^{-(t+\gamma)/2}\Bigr] f^{*}(t)\,dt.
\]
Using the identity
\[
e^{x}+e^{-x} = 2\cosh(x),
\]
the expression in brackets can be rewritten as
\[
e^{(t+\gamma)/2}-2+e^{-(t+\gamma)/2}
=2\cosh\!\left(\tfrac{t+\gamma}{2}\right)-2.
\]
Then we have $\texttt{TVL}=2L\sqrt(W)$, and the claimed formula now follows directly.
\end{proof}

\begin{proof}[Proof of Corollary \ref{longterm_volume}]
By Lemma~2 and Corollary~2 of \cite{amm3} (see also Example~2 of \cite{amm1}),
when the mid price is $W$ and the log mispricing is $t$, the arbitrageur (if present)
trades exactly the amount needed to bring the pool price back to the closest boundary
$e^{\pm\gamma}$ of the no-arbitrage band. 

For a CPMM with invariant $xy=L^2$, this rebalancing induces the following
\emph{gross notional volume} (denominated in the quote asset): 
\[
L\sqrt{W}\,p\left(e^{\gamma/2}-e^{-t/2+\gamma}\right)1_{\{t>\gamma\}}
\quad\text{and}\quad
L\sqrt{W}\,p\left(e^{-\gamma/2}-e^{-t/2-\gamma}\right)1_{\{t<-\gamma\}},
\]
respectively for the overpricing ($t>\gamma$) and underpricing ($t<-\gamma$) cases.
These follow from the CPMM algebra in \cite{amm3}, specialised to trade size
(gross volume) rather than profit. 

In the stationary regime, the distribution of $t$ is $\mu_{*}$ with density $f^{*}$, and the
arbitrageur is present independently with probability $p$.
Taking expectations with respect to $f^{*}$ yields the claimed expression.
\end{proof}

\begin{remark}\label{hollo}{\rm 
For the general class of constant function market makers (CFMMs), the trading function 
$f(x,y)=x^\theta y^{1-\theta}$ 
for some $\theta\in (0,1)$ and
positions in the risky/riskless asset must 
satisfy $f(x,y)=L$ at all time. 
As it is calculated in the second example of~\cite{amm1}, the holdings maximising pool value, at price $W$, are equal to
\begin{equation}\label{forma}
x^*(W)=L\left(\frac{\theta}{(1-\theta)W}\right)^{1-\theta},\quad y^*(W)=L\left(\frac{(1-\theta)W}{\theta}\right)^{\theta}.
\end{equation}
Let us now introduce the constants 
$$
c_1=
\left(\frac{\theta}{1-\theta}\right)^{1-\theta},
\quad c_2=
\left(\frac{1-\theta}{\theta}\right)^\theta.
$$
Calculations analogous to those of 
Corollary \ref{longterm} show that, in this case
the expected profit of the arbitrageur at a given 
trading date equals
\begin{eqnarray}
& & L{W}^\theta p\int_{\gamma}^\infty [c_1(e^{t(1-\theta)}-e^{\gamma(1-\theta)}) + c_2
(e^{\gamma-t\theta}-e^{\gamma(1-\theta)})]
f^{*}(t)\, dt \label{wq:arb_profit1}\\
&+& L{W}^\theta p\int_{-\infty}^{-\gamma}[
c_1(e^{t(1-\theta)}-e^{\gamma(\theta-1)})+
c_2(e^{-t\theta-\gamma}-e^{\gamma(\theta-1)})]f^{*}(t)\, dt \enspace . \label{wq:arb_profit2}
\end{eqnarray}

}
\end{remark}

\section{Limitations}
\label{sec:limitations}

Notwithstanding the fact that an attempt has been made to derive mathematical methods for a larger class of stochastic models to arrive at a more complete picture, our approach has some limitations. In this work, only L1 blockchains are considered and we believe that our discussion can be extended to L2 chains, whereby transactions are executed in rollups. Such an approach would entail much lower transaction fees. Another limitation of our work is that we only investigate V2 AMMs and no analysis has been made on concentrated liquidity provision (Uniswap V3). Finally, let us draw attention to the fact that the stochastic effects of noise traders are also neglected in this work. This topic shall constitute new research directions.

\section{Related Works}
\label{sec:related}

The similarities between market scoring rules and automated market making were pointed out by R. Hanson~\cite{hanson2007logarithmic} as early as 2002. The performance of a rebalancing strategy for various AMMs has been investigated in~\cite{amm1}. The concept of LVR, along with empirical studies, is also present in this work. The authors of~\cite{amm3} set out to find a closed-form solution of the AMM problem when fees are present. A comprehensive framework is provided in~\cite{complexityapproximationtradeoffsinexchnagemechanisms} for describing and analysing exchange mechanisms, enabling straightforward comparisons between limit order books and AMMs in terms of complexity and expressiveness. 
The work elucidates the notion of exchange complexity, clarifying the extent to which certain exchange methods are simpler compared to others. In~\cite{AMyersonianFrameworkforOptimalLiquidity}, the authors analyse the profit-maximising strategy of a monopolist liquidity provider AMMs using a Bayesian-inspired belief framework. 
The authors in~\cite{the_costs_of_swapping_on_the_uniswap_protocol} evaluate the efficiency of DEX trading using slippage as a key metric; their study explores the components of transaction costs, highlighting the variation based on trade characteristics.~\cite{rao2023trianglefees} investigates new fee structures to make AMMs more robust. 
In another line of work, FLAIR~\cite{milionis2023flair} has been introduced as a novel metric for liquidity provider competitiveness, complementing LVR.
\cite{Dewey2023CFMM} bears resemblance to our work, notably through its empirical investigations and model extensions, however, unlike our work the authors present no rigorous mathematical discussion on the existence of the SPD.~\cite{defiarbitrage} focuses on developing a risk-neutral pricing and hedging framework for constant product market makers (CPMM) liquidity tokens aiming to integrate this DeFi product into the principles of modern financial theory. 
A fairly recent work~\cite{Hasbrouck2023DecentralizedExchange} puts forth an economic model of a DEX with concentrated liquidity provision, examining a continuous-time model for a single DEX facilitating the trading of a risky asset. The authors demonstrate that the portfolio characteristics of concentrated liquidity provision can be directly linked to those of a covered call trading strategy. The findings of~\cite{conceptualflaws2021} suggest that when there are sufficiently many liquidity providers who are not overly risk-averse, linear pricing becomes much more cost-effective for investors.

\section{Convergence of the stationary distribution}
\label{sec:convergence}

\begin{figure}[h!]
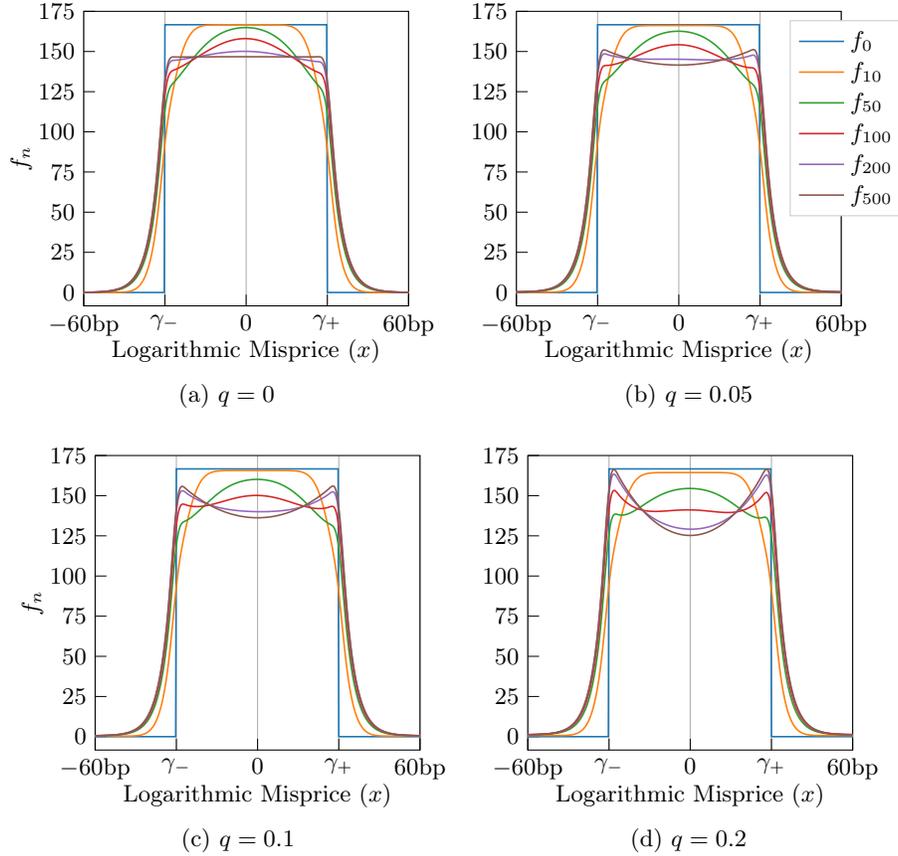

    \centering
    \fast{
        \includegraphics[width=\textwidth]{figures/function_iteration/f_functions_Q_series_NIT1000_NPT801.png}
}{
\pgfplotsset{
  sdpstyle/.style={
font=\small,
height=55mm, width=59mm,
minor xtick={},
minor ytick={},
tick pos=left,
x grid style={darkgrey176},
xlabel={Logarithmic Misprice (\(\displaystyle x\))},
xmin=-0.006, xmax=0.006,
xtick style={color=black},
xtick={-0.006,-0.004,-0.002,0,0.002,0.004,0.006},
y grid style={darkgrey176},
ylabel={}, ylabel style={yshift=-4mm},
ymin=-8.333333333335, ymax=175.000000000035,
ytick style={color=black},
ytick={0,25,50,75,100,125,150,175},
    xtick={-0.006,-0.003,0,0.003,0.006},
    xticklabels={$-60$bp, $\gamma_{-}$, $0$, $\gamma_{+}$, $60$bp},
    xlabel style={yshift=2mm},xmajorgrids=true,
    scaled ticks=false,
    legend cell align={left},
legend style={fill opacity=1, draw opacity=1, text opacity=1, draw=lightgrey204, at={(1.2,0.97)}},
  }
}
\subfloat[$q=0$]{\input{figures/fig_p0.tex}} \enspace
\subfloat[$q=0.05$]{\input{figures/fig_p05.tex}}
\\
\subfloat[$q=0.1$]{\input{figures/fig_p1.tex}} \enspace
\subfloat[$q=0.2$]{\input{figures/fig_p2.tex} 
}
}
    \caption{Convergence of the SPD of the misprice process with ($q > 0$) and without jumps ($q = 0$). Observe that the stationary state the system converges onto with different $q$ values. The number of grid points was set at 801. The density was iterated 1000 times and for clarity, only some of the functions are presented in the figures.}
    \label{fig:convergence}
\end{figure}
Now we turn to the investigation of the convergence properties of the SPD. The zeroth order density function \( f_0(x, \gamma_+, \gamma_-) \) is defined as a window function:  
\[
f_0(x, \gamma_+, \gamma_-) = 
\begin{cases} 
1 & \text{if } -\gamma_- \leq x \leq \gamma_+, \\
0 & \text{otherwise.}
\end{cases}
\]

Note that other initial guesses are also possible (e.g. Dirac delta in 0), however for faster convergence it is expedient to use an initial distribution that is as close to the final one as possible. The higher order densities are calculated using Eq.~\ref{eq:functioniteration} with a numerical integration formula. We normalise the density at each step and let the integrator run its course using the results of the previous iteration. The numerical integration based on Eq.~\ref{eq:functioniteration} was performed on a discrete grid using the Gauss–Kronrod quadrature formula with 15 quadrature points. 
The convergence of the stationary distribution of the misprice process was investigated using a numerical integrator. Our results are presented in Fig.~\ref{fig:convergence}. For this calculation, we used the same parameter set as in~\cite{amm3}, e.g. $\sigma = 5\%$, $\mu = \sigma^2/2$, $\Delta t = 12$s, $-\gamma = +\gamma = 0.003$ and the integration was performed on 801 grid points. In Fig.~\ref{fig:convergence} one can observe the effect of jumps (i.e. when $q > 0$). With jumps, the SPD has a well in the middle, implying that it is rare for the misprice to be around 0 compared to the case with no jumps. This is in accord with our intuition, as jumps in the external market price lead to increased arbitrage activity and the arbitrageurs tend to push the AMM back to the boundary more often. 

\section{Reproducing reference values}
\label{sec:refrence_values}

Our objective was first to reproduce the numerical values in Table 1. of~\cite{amm3} with our numerical solver that is capable of calculating the SPD up to arbitrary order and accuracy. The iteration order and the grid size are external parameters to the integrator module. 

The size of the grid has been varied and the absolute error for different block times and fee levels with respect to the reference values taken from Table 1. of~\cite{amm3} are presented in Fig.~\ref{fig:reference_errors} and the calculated values are presented in Table~\ref{tab:trade_region_comparison}. The calculation of the fast block region presented serious difficulties for the numerical integrator. Therefore the data points of $50$ msec have not been reproduced.

\begin{table}[htbp]
\centering
\caption{Comparison of reference trade region coverage (reference values are taken from~\cite{amm3}) with calculated results (in parentheses) for various block times ($\Delta t$) and fee levels ($\gamma$, expressed in basis points). Integrated results are rounded to one decimal place.}
\label{tab:trade_region_comparison}
\begin{tabular}{l|ccccc}
\hline
$\Delta t \setminus \gamma$ & 1 bp   & 5 bp   & 10 bp  & 30 bp  & 100 bp \\
\hline
10 min & 96.7\% (96.8\%) & 85.5\% (85.5\%) & 74.7\% (74.6\%) & 49.6\% (49.4\%) & 22.8\% (20.8\%) \\
2 min  & 92.9\% (93.1\%) & 72.5\% (72.6\%) & 56.9\% (56.9\%) & 30.5\% (30.5\%) & 11.6\% (11.6\%) \\
12 sec & 80.7\% (81.9\%) & 45.6\% (45.9\%) & 29.5\% (29.6\%) & 12.3\% (12.2\%) & 4.0\% (4.0\%)  \\
2 sec  & 63.0\% (67.2\%) & 25.4\% (26.2\%) & 14.5\% (14.8\%) & 5.4\% (5.4\%)  & 1.7\% (1.7\%)  \\
\hline
\end{tabular}
\end{table}

The model parameters are set according to that of~\cite{amm3} in the symmetric case. 
The daily volatility was set at 5\%, the diffusion drift ($\mu$) was set at $\sigma^2/2$ and the fee levels and block times were varied to reproduce the results of~\cite{amm3}. 
The size of the trade region was calculated on the discrete grid as the ratio of $\int^{-\gamma}_{-\infty}f^{*}(t)dt +\int^{+\infty}_{+\gamma}f^{*}(t)dt$ compared to $\int^{+\infty}_{-\infty}f^{*}(t)dt$.  
Note that the probability of stochastic jumps in this case was set to zero, i.e. $q =0$. At the end of each iteration, we normalise the densities, such that their integral from $-\infty$ to $\infty$ equals 1. When using our initial approach for the numerical integrator, we experienced numerical instabilities at the calculation of $h(x, y)$. We traced it to the term in which $\frac{1}{\sigma}$ is multiplied by $f\left(\frac{y-\mu(x)}{\sigma(x)}\right)$. When the volatility is small this operation leads to numerical instability, and we solved this issue by calculating the logarithm of these terms. The interested reader is referred to the open-source repository for the implementation details.
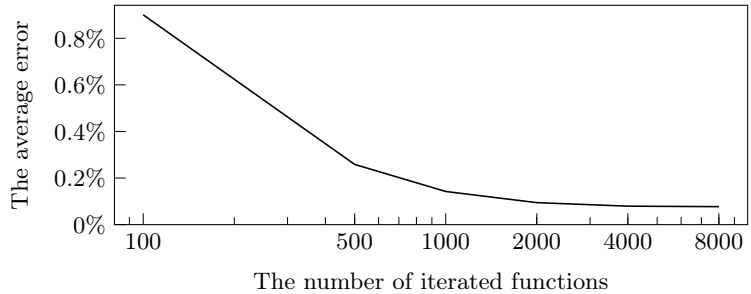
\begin{figure}[h!]
    \centering
\begin{tikzpicture}

\definecolor{darkgrey176}{RGB}{176,176,176}
\definecolor{lightgrey204}{RGB}{204,204,204}

\begin{axis}[
font=\small,
height=45mm,
legend cell align={left},
legend style={fill opacity=0.8, draw opacity=1, text opacity=1, draw=lightgrey204},
log basis x={10},
minor xtick={2,3,4,5,6,7,8,9,20,30,40,50,60,70,80,90,200,300,400,500,600,700,800,900,2000,3000,4000,5000,6000,7000,8000,9000,20000,30000,40000,50000,60000,70000,80000,90000,200000,300000,400000,500000,600000,700000,800000,900000},
minor ytick={},
tick pos=left,
width=100mm,
x grid style={darkgrey176},
xlabel={The number of iterated functions},
xmin=80.3240320295375, xmax=9959.65939192168,
xmode=log,
xtick style={color=black},
xtick={100,500,1000,2000,4000,8000},
xticklabels={100,500,1000,2000,4000,8000},
y grid style={darkgrey176},
ylabel={The average error},
ymin=0, ymax=0.941729988246625,
ytick style={color=black},
ytick={0,0.2,0.4,0.6,0.8,1},
yticklabels={$0$\%,$0.2$\%, $0.4$\%, $0.6$\%, $0.8$\%, $1$\%}
]
\addplot [semithick, black, mark=none]
table {%
100 0.900559806536667
500 0.258497412421667
1000 0.142323463159167
2000 0.0944325347041666
4000 0.0793160965766666
8000 0.0771561723375
};
\end{axis}

\end{tikzpicture}
    \caption{The average absolute errors of our numerical integrator with respect to the values of Table 1. of~\cite{amm3} with the number of function iterations. The number of grid points was set at 801. The average of errors with respect to the values in Table 1. of~\cite{amm3} is shown.}
    \label{fig:reference_errors}
\end{figure}
Notice that in some cases there is some residual error, but it can be decreased arbitrarily by increasing the number of grid points and the number of iterated functions. We observe the most discrepancy at $\Delta t = 2$ sec, $\gamma = 1$bp. For this data point the result converges very slowly to its reference value with an increased number of iterated functions and grid points. The reason behind this slow convergence is yet to be investigated.
Regarding the runtime of the numerical integrator, it calculates the data points that were necessary to generate Fig.~\ref{fig:reference_errors} in a few minutes fully utilising an AMD Ryzen 9 3900X 12-Core CPU.

\subsection{Arbitrage Profit}
\label{arb_profit_table}

Having calculated the SPD of the misprice process, we can calculate the arbitrage profits for a given trading interval. To this end, we rely on the above-introduced equation without the price ($W^{\theta}$) and the liquidity ($L$) terms, i.e.:
\begin{eqnarray}
\mathrm{ARB} = & & p\int_{\gamma}^\infty [c_1(e^{t(1-\theta)}-e^{\gamma(1-\theta)}) + c_2
(e^{\gamma-t\theta}-e^{\gamma(1-\theta)})]
f^{*}(t)\, dt\\
&+& p\int_{-\infty}^{-\gamma}[
c_1(e^{t(1-\theta)}-e^{\gamma(\theta-1)})+
c_2(e^{-t\theta-\gamma}-e^{\gamma(\theta-1)})]f^{*}(t)\, dt,
\label{eq:arbprofits}
\end{eqnarray}
where $p$ is the probability of an arbitrage trade, $\theta$ is a parameter of the AMM, $\gamma$ is the symmetric fee rate, $c_1$ and $c_2$ are as introduced above and $f^{*}(t)$ is the SPD of the logarithmic misprice process $z_t$. 

We calculate arbitrage profits in the symmetric setting, whereby $\mu = \sigma^2 /2$ and present our results in Table~\ref{tab:arbitrageprofits}. Notice that arbitrage profits increase significantly with increasing jump probability. This is in line with the common belief that arbitrage activity is generated by sudden price jumps. Observe for example the row where the block time is 12 seconds (Ethereum) and an AMM with 30 bp fee level (typical for Uniswap V2) and notice that jumps lead to orders of magnitude larger arbitrage profits.

\begin{table}[h!]
    \centering
    \caption{The proportion of $L\sqrt{W}$ accounting for arbitrage profits for different model parameters (block times and fee levels) without jumps and with jumps, with 4 different $q$ values. In each cell, the upper number corresponds to $q=0$ and the lower number corresponds to $q=0.2$. ARB was calculated according to Eq.~\ref{eq:arbprofits}. The number of grid points used was 801, the number of iterated distributions was set to 1000. $\theta = 0.5$, $\sigma = 5\%$ (daily), $\mu = \sigma^2 /2$.}
\begin{tabular}{llrrrrr}
\hline
$\Delta t$ & $q$ & $1$ bp & $5$ bp & $10$ bp & $30$ bp & $100$ bp \\
\hline
\hline
10 min & 0   & $\grayzero{0.00}202760$ & $\grayzero{0.00}175490$ & $\grayzero{0.00}149120$ & $\grayzero{0.00}087340$ & $\grayzero{0.00}023100$ \\
       & 0.05& $\grayzero{0.00}583060$ & $\grayzero{0.00}536270$ & $\grayzero{0.00}485070$ & $\grayzero{0.00}328570$ & $\grayzero{0.000}62900$ \\
       & 0.1 & $\grayzero{0.00}855130$ & $\grayzero{0.00}794330$ & $\grayzero{0.00}725320$ & $\grayzero{0.00}500800$ & $\grayzero{0.000}91520$ \\
       & 0.2 & $\grayzero{0.0}1199680$ & $\grayzero{0.0}1121090$ & $\grayzero{0.0}1029490$ & $\grayzero{0.00}718960$ & $\grayzero{0.00}129170$ \\
\hline
2 min  & 0   & $\grayzero{0.000}09700$ & $\grayzero{0.0000}7500$ & $\grayzero{0.0000}5900$ & $\grayzero{0.0000}3200$ & $\grayzero{0.0000}1100$ \\
       & 0.05& $\grayzero{0.000}38730$ & $\grayzero{0.000}34990$ & $\grayzero{0.000}31370$ & $\grayzero{0.000}21390$ & $\grayzero{0.0000}4600$ \\
       & 0.1 & $\grayzero{0.000}66180$ & $\grayzero{0.000}60930$ & $\grayzero{0.000}55420$ & $\grayzero{0.000}38560$ & $\grayzero{0.000}07800$ \\
       & 0.2 & $\grayzero{0.00}116750$ & $\grayzero{0.00}108700$ & $\grayzero{0.000}99710$ & $\grayzero{0.000}70100$ & $\grayzero{0.000}13500$ \\
\hline
12 sec & 0   & $\grayzero{0.000000}84$ & $\grayzero{0.000000}47$ & $\grayzero{0.000000}30$ & $\grayzero{0.000000}13$ & $\grayzero{0.0000000}4$ \\
       & 0.05& $\grayzero{0.00000}460$ & $\grayzero{0.00000}400$ & $\grayzero{0.00000}360$ & $\grayzero{0.00000}250$ & $\grayzero{0.000000}52$ \\
       & 0.1 & $\grayzero{0.00000}860$ & $\grayzero{0.00000}780$ & $\grayzero{0.00000}710$ & $\grayzero{0.00000}500$ & $\grayzero{0.00000}100$ \\
       & 0.2 & $\grayzero{0.0000}1800$ & $\grayzero{0.0000}1700$ & $\grayzero{0.0000}1500$ & $\grayzero{0.0000}1100$ & $\grayzero{0.00000}220$ \\
\hline
2 sec  & 0   & $\grayzero{0.0000000}2$ & $\grayzero{0.0000000}1$ & $\grayzero{0.00000000}$ & $\grayzero{0.00000000}$ & $\grayzero{0.00000000}$ \\
       & 0.05& $\grayzero{0.000000}13$ & $\grayzero{0.000000}11$ & $\grayzero{0.000000}10$ & $\grayzero{0.0000000}7$ & $\grayzero{0.0000000}1$ \\
       & 0.1 & $\grayzero{0.000000}25$ & $\grayzero{0.000000}22$ & $\grayzero{0.000000}20$ & $\grayzero{0.000000}15$ & $\grayzero{0.0000000}3$ \\
       & 0.2 & $\grayzero{0.000000}53$ & $\grayzero{0.000000}49$ & $\grayzero{0.000000}45$ & $\grayzero{0.000000}32$ & $\grayzero{0.0000000}7$ \\
\hline
\end{tabular}
\label{tab:arbitrageprofits}
\end{table}

\subsection{Further Results}

Using the blockchain data extracted from our execution client, we plot Figure~\ref{fig:daily_volatility} to demonstrate the daily volatility and the number of executed trades over an interval between November and December 2025. In the right subplot of this figure, a histogram of the exchanged amount of ETH is presented for the same interval. Notice that the number of trades per day can sometimes vary entirely independently of the daily volatility. This further supports the claim that volatility itself cannot be the only source of arbitrage trades on an AMM. 
\begin{figure}[h!]
    \centering
    \includegraphics[width=\textwidth]{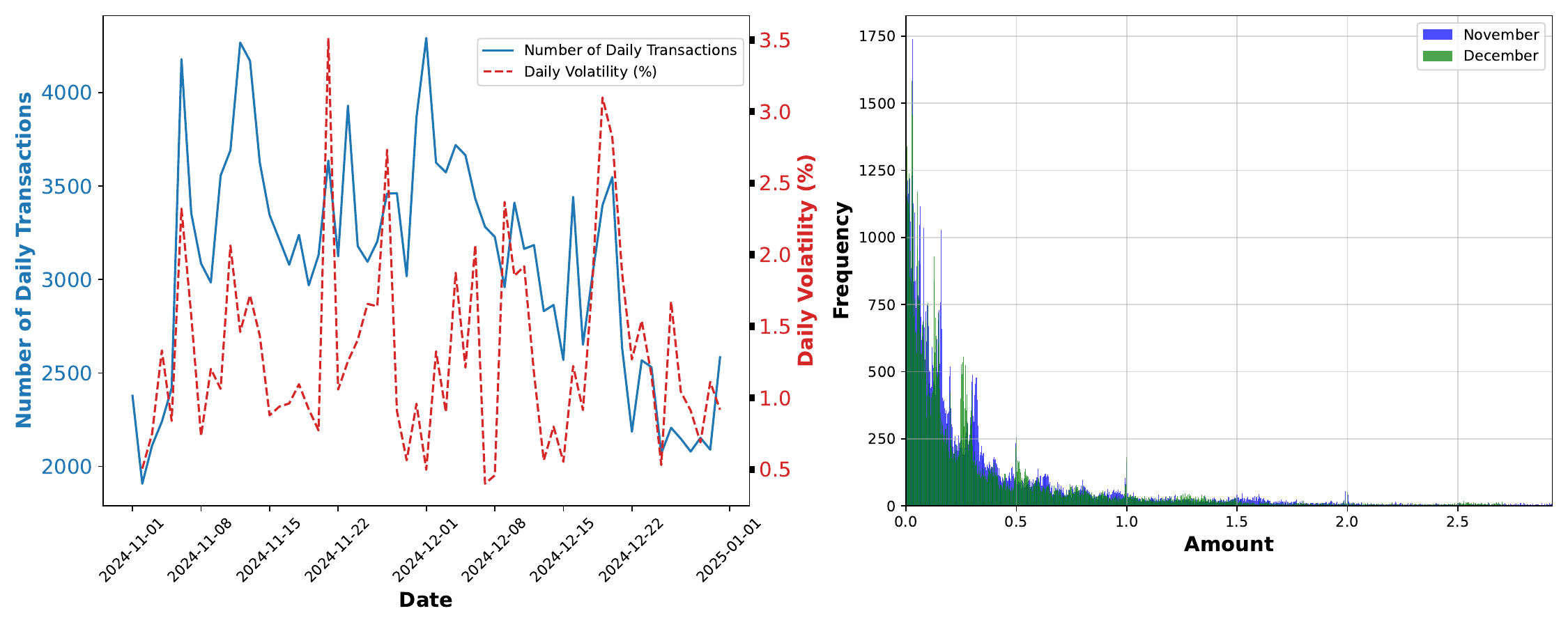} 
    \caption{The daily volatility in \% vs. the number of DEX transactions per day for each day throughout November and December 2024 (left) and a histogram of the amount of exchanged ETH during the investigated interval (right).}
    \label{fig:daily_volatility}
\end{figure}

\begin{table}
\caption{Estimated parameters of the fitted model for different jump thresholds $\tau$. 
Columns are the log-likelihood (LL), daily volatility parameter $\sigma$, daily mean parameter $\mu$, the per-step ($12$ sec) jump probability ($q$), and the daily jump-size distribution mean $\mu_U$ and daily variance $\sigma_U^{2}$.}
\label{table:fit_params}
\centering
\footnotesize
\setlength{\tabcolsep}{12pt} 
\begin{small}
\begin{tabular}{r | r r r r r r}
\hline
$\tau$ & LL & $\sigma$ & $\mu$ & $\mu_{J}$ & $\sigma_{J}$ & $q$\\
\hline
\hline
-- & 6.6192 & 0.0234 & 0.0234 & -- & -- & 0 \\
\hline
1.5 & 7.6895 & 0.0234 & 0.0517 & -0.0699 & 0.1649 & 0.0582 \\
1.6 & 7.6901 & 0.0234 & 0.0550 & -0.0911 & 0.1710 & 0.0541 \\
1.7 & 7.6903 & 0.0234 & 0.0587 & -0.1165 & 0.1767 & 0.0506 \\
1.8 & 7.6901 & 0.0234 & 0.0597 & -0.1321 & 0.1820 & 0.0476 \\
1.9 & 7.6896 & 0.0234 & 0.0598 & -0.1466 & 0.1883 & 0.0443 \\
2.0 & 7.6886 & 0.0234 & 0.0677 & -0.2035 & 0.1952 & 0.0411 \\
2.1 & 7.6875 & 0.0234 & 0.0724 & -0.2498 & 0.2018 & 0.0384 \\
2.2 & 7.6862 & 0.0234 & 0.0798 & -0.3135 & 0.2075 & 0.0362 \\
2.3 & 7.6847 & 0.0234 & 0.0791 & -0.3350 & 0.2138 & 0.0340 \\
2.4 & 7.6828 & 0.0234 & 0.0778 & -0.3571 & 0.2211 & 0.0317 \\
2.5 & 7.6809 & 0.0234 & 0.0799 & -0.4027 & 0.2281 & 0.0297 \\
3.0 & 7.6709 & 0.0234 & 0.0587 & -0.3345 & 0.2581 & 0.0228 \\
3.5 & 7.6603 & 0.0234 & 0.0470 & -0.2828 & 0.2868 & 0.0182 \\
4.0 & 7.6498 & 0.0234 & 0.0349 & -0.1623 & 0.3155 & 0.0148 \\
\hline
\end{tabular}
\end{small}
\end{table}

\section{Statistical Tests}
\label{sec:statisticaltests}

To underpin the applicability of the above-introduced mathematical framework, we investigated cryptocurrency exchange rates and applied statistical tests. Market data of cryptocurrencies \texttt{fil}, \texttt{btc}, \texttt{ETH}, \texttt{dxdy}, \texttt{xch}, \texttt{1inch}, \texttt{aave}, \texttt{ltc}, \texttt{beam}, \texttt{cro}, \texttt{sol} and \texttt{bnb} has been collected. Some of these cryptocurrencies are also tradable on platforms like Uniswap.
\begin{table}[ht]
\centering
\caption{Analysis results of hourly log-returns using the exchange rate of various cryptocurrencies from December 11, 2024, to January 15, 2025: kurtosis, skewness and Kolmogorov-Smirnov p-values.}
\begin{small}
\begin{tabular}{lrrr}
  \hline
Cryptocurrency & Kurtosis & Skewness & KSP \\ 
  \hline
1inch &  7.5280 & -0.3181 & 0.0027 \\ 
  aave &  5.7957 &  0.5147 & 0.0011 \\ 
  arb &  6.3933 & -0.1058 & 0.0021 \\ 
  beam & 14.4068 & -0.5742 & 0.0000 \\ 
  bnb &  9.6333 & -0.5604 & 0.0001 \\ 
  btc &  6.6794 & -0.4529 & 0.0000 \\ 
  cro & 13.7036 &  0.9348 & 0.0000 \\ 
  dxdy &  6.8077 &  0.2083 & 0.0008 \\ 
  ETH &  8.7076 & -0.5141 & 0.0000 \\ 
  fil &  7.5472 & -0.2352 & 0.0002 \\ 
  ltc & 14.7859 &  0.7149 & 0.0001 \\ 
  sol &  8.6612 & -0.1319 & 0.0001 \\ 
  xch &  7.8040 & -0.1468 & 0.0004 \\ 
   \hline
\end{tabular}
\label{table:hourlylogreturns}
\end{small}
\end{table}
In Table~\ref{table:hourlylogreturns}. results of a Kolmogorov-Smirnov test, along with kurtosis and skewness values are presented for these cryptocurrencies. Numerical data has been obtained from an API provided by GateIO\footnote{\href{https://www.gate.io/gate-api}{https://www.gate.io/gate-api}}. The high kurtosis values in the tables indicate that the corresponding distributions have fatter tails, meaning that they contain more extreme values or outliers compared to a normal distribution as conjectured in the introduction. The positive and the negative skewness values imply that the distributions are not centered, but rather stretched to the right/left. Finally, the small p-values for the Kolmogorov-Smirnov test for Gaussian distribution also implies that our conjectures were right.

This analysis on empirical data shows such heterogeneity in the invariant densities that it cannot be
reproduced by a simple two-parameter model family of price processes such as the Black-Scholes model.
We emphasise our findings again. In order to have a better match of the data with model predictions, a much larger class of
models should be considered. Section~\ref{sec:invariantprobability} of the present work guarantees that an invariant density
exists and can be calculated in a remarkably wide class of models.

\end{document}